\documentclass[11pt,reqno,a4paper]{amsart}
\usepackage{amsfonts}
\usepackage{ifthen}
\usepackage{amsthm,cite}
\usepackage{amsmath,mathrsfs}
\usepackage{graphicx}
\usepackage{subcaption}
\usepackage{amscd,amssymb,amsthm}
\usepackage{color}
\usepackage{hyperref}
\usepackage{amsthm}
\usepackage{amsmath}

\allowdisplaybreaks

\usepackage{array,multirow,makecell}

\setcellgapes{1pt}
\makegapedcells
\newcolumntype{R}[1]{>{\raggedleft\arraybackslash }b{#1}}
\newcolumntype{L}[1]{>{\raggedright\arraybackslash }b{#1}}
\newcolumntype{C}[1]{>{\centering\arraybackslash }b{#1}}
\newcounter{minutes}
\divide\time by 60
\newcounter{hours}
\multiply\time by 60 \addtocounter{minutes}{-\time}

\newtheorem{theo}{Theorem}[section]

\newtheorem{theorem}{Theorem}[section]

\newtheorem{lemma}[theorem]{Lemma}

\newtheorem{corollary}[theorem]{Corollary}
\newtheorem{remark}[theorem]{Remark}

\usepackage{tikz,xcolor,hyperref}
\definecolor{lime}{HTML}{A6CE39}
\DeclareRobustCommand{\orcidicon}{%
	\begin{tikzpicture}
	\draw[lime, fill=lime] (0,0)
	circle [radius=0.16]
	node[white] {{\fontfamily{qag}\selectfont \tiny ID}};
	\draw[white, fill=white] (-0.0625,0.095)
	circle [radius=0.007];
	\end{tikzpicture}
	\hspace{-2mm}
}

\foreach \x in {A, ..., Z}{%
	\expandafter\xdef\csname orcid\x\endcsname{\noexpand\href{https://orcid.org/\csname orcidauthor\x\endcsname}{\noexpand\orcidicon}}
}

\title[]{A new class of coherent states involving Fox-Wright functions and their generalization in the bicomplex framework}
\author[Snehasis Bera]{Snehasis Bera}
\address{Snehasis Bera\newline Department of Mathematics,\newline National Institute of Technology Jamshedpur, Jamshedpur-831014, Jharkhand, India.}
\email{berasnehasis1996@gmail.com}
\author[Sourav Das]{Sourav Das}
\thanks{$^\ast$Corresponding author}
\address{Sourav Das\newline Department of Mathematics,\newline National Institute of Technology Jamshedpur, Jamshedpur-831014, Jharkhand, India.}
\email{souravdasmath@gmail.com, souravdas.math@nitjsr.ac.in}
\author[Abhijit Banerjee]{Abhijit Banerjee$^\ast$}
\address{Abhijit Banerjee\newline Department of Mathematics,\newline Garhbeta College, Paschim Medinipur-721127, West Bengal, India.}
\email{abhijit.banerjee.81@gmail.com}
\keywords{Bicomplex functions, Fox-Wright function, Coherent states}
\subjclass{81R30; 30B10; 30G35; 33C20}
\date{}
\begin{document}

\begin{abstract}
  In this work, an extensive class of coherent states is introduced by taking the Fox–Wright function as the normalization function. It is demonstrated that these states satisfy the key requirements of continuity, normalizability and resolution of unity. Furthermore, coherent states associated with the continuous spectrum are obtained through a discrete-to-continuous limiting procedure. Moreover, FW-generalized multi-parameter $\nu$-function is introduced and shown to act as the normalization function for the Fox–Wright coherent states in the continuous spectrum. Later the Fox–Wright function with bicomplex arguments has been introduced and its existence has been investigated. Bicomplex Fox–Wright coherent states are also developed for the discrete spectrum based on this new function and their properties are analyzed. Subsequently, the results regarding Fox-Wright coherent states are generalized to the bicomplex setting. In addition, a bicomplex FW-generalized multi-parameter $\nu$-function is defined to demonstrate that it provides the normalization for these states in the continuous spectrum. 
\end{abstract}

\maketitle

\section{Introduction and motivation}
   Exactly a century ago, E. Schr\"{o}dinger \cite{sch} introduced the problem of identifying quantum states that replicate the behavior of corresponding classical states. Later in 1963, R. J. Glauber \cite{coh_incoh} extended Schrödinger’s idea to quantum electrodynamics and formally introduced the term ‘coherent states’. Over the following decades, attention was devoted to the investigation of the properties and applications of coherent states for the one-dimensional harmonic oscillator. As these coherent states are defined in terms of the bosonic annihilation and creation operators $\hat{a}$ and $\hat{a}^\dagger$, which satisfy the canonical commutation relation
$[\hat{a}, \hat{a}^\dagger] = 1$, they are known as canonical coherent states. Simultaneously, it was recognized that coherent states are not confined to the one-dimensional harmonic oscillator, whose energy spectrum varies linearly with the principal quantum number $n$ rather, this formalism can be generalized to systems with nonlinear energy spectrum. For nonlinear deformed oscillators \cite{deformed_osc}, one can construct coherent states through the annihilation and creation operators
 \begin{equation}\label{2}
  A=Bf(N), A^\dag=f(N)B^\dag\nonumber
\end{equation}
which satisfy the deformed commutation relations
\begin{equation}\label{3}
[N,A]=-A, [N,A^\dag]=A^\dag \mbox{ and }[A,A^\dag]=(N+1)f^2(N+1)-Nf^2(N)\nonumber
\end{equation}
where $f$ denotes a hermitian operator-valued function of number operator and coherent states can be defined as the eigenstates of $A$. Today, coherent states are widely used in many areas of physics, such as signal and quantum information processing, quantum optics and beyond. Various extension of coherent states have been developed, including generalized hypergeometric coherent states \cite{gh_cs}, Mittag Lafler coherent states \cite{ml_cs}, coherent states for generalized Laguerre functions \cite{lg_cs}, generalized coherent states associated with $C_\lambda$-extended oscillator \cite{ann_phys} etc. Numerous important books and research articles have been written on coherent states and their various applications see \cite{coherent_app,deformed_osc,gen_coherent_app}. More interestingly, in recent time, bicomplex quantum mechanics has emerged as a viable extension of standard quantum theory (see \cite{BC-H,bc_hamiltonian} and references therein), and a few works extending the notion of coherent states into the bicomplex domain \cite{bicomplex ghf,bicomplex_bessel} has appeared in the literature.

Recently Popov \cite{int_mittage}, constructed and studied coherent states associated with the multi-index Mittag–Leffler function. Subsequently, by applying the discrete-to-continuous limit, coherent states for a continuous spectrum were derived and their properties were analyzed. Next, Popov \cite{applfox}, developed coherent states employing the Fox–Wright type function and examined their associated properties. Motivated by this, we construct Fox–Wright coherent states in which the Fox–Wright function acts as the normalization function and  investigate their properties. These states are distinct from the generalized coherent states reported in \cite{applfox}. Furthermore, we construct coherent states corresponding to the continuous spectrum and show that they fulfill the requirements of continuity, normalizability, and resolution of unity. Moreover, the Fox–Wright function with bicomplex arguments is introduced and its existence is investigated. Subsequently, all results concerning Fox–Wright coherent states are extended to the bicomplex setting.

The Fox–Wright function is an important special function that plays a significant role in fractional calculus, approximation theory, mathematical physics and related areas of science and engineering. Many special functions, including Wright function \cite{wright}, generalized hypergeometric function \cite{sp_function}, Mittag–Leffler function \cite{mittagleffler} and Bessel function \cite{bessel} can be represented as particular cases of the Fox–Wright function.
The Fox-Wright function ${}_p\psi_q[\cdot]$ generalization of the hypergeometric function is defined as follows \cite{asy_ghf,asymghf} :
    \begin{align}\label{eq:cfw1}
         {}_{p}\psi_{q}\left[\begin{matrix}&(a_1,A_1),\ldots &(a_p,A_p);\\& (b_1,B_1),\ldots&(b_q,B_q);&\end{matrix}z\right]=\sum_{n=0}^{\infty}\frac{\prod\limits_{l=1}^{p}\Gamma(a_l+nA_l)}{\prod\limits_{r=1}^{q}\Gamma(b_r+nB_r)}\frac{z^n}{n!},
    \end{align}
    where $a_l,b_r\in\mathbb{C}$ and $A_l,B_r\in\mathbb{R}^+$ for $l=1,2,\ldots p, r=1,2,\ldots q$. The series \eqref{eq:cfw1}, converges absolutely and uniformly in the whole complex space if 
    \begin{align}\label{eq:vt5}
        1+\sum_{r=1}^qB_r-\sum_{l=1}^pA_l>0.
    \end{align}

The structure of the paper is as follows: in section \ref{sec2}, Fox–Wright coherent states are constructed for both discrete and continuous spectrum and their properties are examined. In section \ref{sec3}, the Fox–Wright function in the bicomplex setting is defined and its existence is investigated. In section \ref{sec5}, all results related to Fox–Wright coherent states are extended to the bicomplex setting.  
\section{Construction of Fox-Wright coherent states}\label{sec2}
In this section, we define Fox-Wright coherent states, categorize them according to well-defined convergence criteria and identifies the parameter constraints governing these states. Moreover, we have derived Fox-Wright
coherent states corresponding to the continuous spectrum by applying the discrete-continuous limit. We define Fox-Wright states in the infinite dimensional Hilbert space of the Fock vectors $\mid k>, k=1,2,\ldots$ as follows
\begin{eqnarray}\label{eq:gh9}
\mid z> =\frac{1}{\sqrt{\mathcal{N}^{(p,q)}(\mid z\mid^2)}}\sum_{k=0}^\infty \frac{z^k}{\sqrt{\rho^{(p,q)}(k)}}\mid k>,
\end{eqnarray}
where the normalization functions are expressed through Fox–Wright functions as given below
\begin{align}\label{eq:a8f}
  \mathcal{N}^{(p,q)}(\zeta)&= \frac{\prod\limits_{r=1}^q \Gamma(b_r)}{\prod\limits_{l=1}^p\Gamma(a_l)} {}_{p}\psi_{q}\left[\begin{matrix}&(a_1,A_1),\ldots &(a_p,A_p);\\& (b_1,B_1),\ldots&(b_q,B_q);&\end{matrix}\zeta\right]\nonumber\\
&=\frac{\prod\limits_{r=1}^q \Gamma(b_r)}{\prod\limits_{l=1}^p\Gamma(a_l)}\sum_{k=0}^{\infty}\frac{\prod\limits_{l=1}^{p}\Gamma(a_l+kA_l)}{\prod\limits_{r=1}^{q}\Gamma(b_r+kB_r)}\frac{\zeta^k}{k!}
\end{align}
 for $\zeta=\mid z \mid^2$ and the parameter function $\rho^{(p,q)}(k)$ in (\ref{eq:gh9}) is defined by
\begin{eqnarray}\label{eq:vt9}
\rho^{(p,q)}(k) = \Gamma(k+1)\frac{\prod\limits_{l=1}^p\Gamma(a_l)}{\prod\limits_{r=1}^q \Gamma(b_r)}\frac{\prod\limits_{r=1}^{q}\Gamma(b_r+kB_r)}{\prod\limits_{l=1}^{p}\Gamma(a_l+kA_l)}.
\end{eqnarray}
Let us consider complex valued function
\begin{equation*}
f^{(p,q)}(s)=\sqrt{\frac{\prod\limits_{l=1}^p\Gamma(a_l+sA_l)\prod\limits_{r=1}^q\Gamma(b_r+(s+1)B_r)}{\prod\limits_{l=1}^p\Gamma(a_l+(s+1)A_l)\prod\limits_{r=1}^q\Gamma(b_r+sB_r)}(s+1)},
\end{equation*}
such that
\begin{equation}\label{eq:vt2}
\prod_{s=0}^{k-1}f^{(p,q)}(s)=\sqrt{\frac{\prod\limits_{l=1}^p\Gamma(a_l)\prod\limits_{r=1}^q\Gamma(b_r+kB_r)}{\prod\limits_{l=1}^p\Gamma(a_l+kA_l)\prod\limits_{r=1}^q\Gamma(b_r)}\Gamma(k+1)}=\sqrt{\rho^{(p,q)}(k)}.
\end{equation}
From \eqref{eq:vt9} and \eqref{eq:vt2}, we obtain the following recurrence relations
\begin{equation}\label{eq:cza9}
\rho^{(p,q)}(0)=1\;\; \mbox{and}\;\;\rho^{(p,q)}(k+1)=\rho^{(p,q)}(k)\left(f^{(p,q)}(k)\right)^2.
\end{equation}
 From the above construction, it follows that $\rho^{(m,n)}(k)$ must be a strictly positive real number. Consequently, this leads to restrictions on the parameters 
\begin{equation*}
a_l>0,b_r>0\;\quad \mbox{for}\;l=1,2,...,p; r=1,2,...,q.
\end{equation*}
Now using the normalization functions, we can compute the scalar product between two Fox–Wright states as follows:
\begin{equation}\label{eq:xd6}
<z \mid z'>=\frac{\mathcal{N}^{(p,q)}(z*z')}{\sqrt{\mathcal{N}^{(p,q)}(\mid z\mid^2)}\sqrt{\mathcal{N}^{(p,q)}(\mid z'\mid)}}.
\end{equation}
Form expression \eqref{eq:xd6}, it follows that the Fox–Wright states are normalized but not orthogonal and this expression is well defined provided that the Fox–Wright functions appearing in the scalar product are convergent. The convergence condition for the Fox–Wright function is given in \eqref{eq:vt5} namely
$$\sum_{r=1}^qB_r-\sum_{l=1}^pA_l>-1.$$
Now, we introduce annihilation and creation operators as follows
\begin{eqnarray}\label{eq:cz7}
\mathcal{A}_-&=&\sum_{k=0}^\infty f^{(p,q)}(k)\mid k><k+1\mid\nonumber\\
\mathcal{A}_+&=&\sum_{k=0}^\infty f^{(p,q)}(k)\mid k+1><k\mid,\nonumber
\end{eqnarray}
where $\mathcal{A}_+ = \mathcal{A}_-^\dagger$ that is $\mathcal{A}_+$ represents the adjoint operator corresponding to $\mathcal{A}_-$ and satisfies the following relations:
\begin{align}\label{eq:csn5}
&\mathcal{A}_-\mid k>=f^{(p,q)}(k-1)\mid k-1>,\nonumber\\
&\mathcal{A}_+\mid k>=f^{(p,q)}(k)\mid k+1>,\\
&<k\mid \mathcal{A}_-\mathcal{A}_+\mid k>=\left[f^{(m,n)}(k)\right]^2,\nonumber
\end{align}
and
\begin{equation}\label{eq:csf6}
<k\mid \mathcal{A}_+\mathcal{A}_-\mid k>=\left[f^{(m,n)}(k-1)\right]^2.
\end{equation}
The operators $\mathcal{A}_-$ and $\mathcal{A}_+$ do not commute and their noncommutative commutation relation is given by:
\begin{equation}\label{eq:csn9}
\left[\mathcal{A}_-,\mathcal{A}_+\right]=\sum_{k=0}^{\infty}\left(\left[f^{(m,n)}(k)\right]^2-\left[f^{(m,n)}(k-1)\right]^2\right)\mid k><k\mid.\nonumber
\end{equation}
Using the recurrence relation (\ref{eq:cza9}), we obtain
\begin{eqnarray}
\mathcal{A}_-\mid z>&=&\left(\sum_{k=0}^\infty f^{(p,q)}(k)\mid k><k+1\mid\right)\frac{1}{\sqrt{\mathcal{N}^{(p,q)}(\mid z\mid^2)}}\left(\sum_{k=0}^\infty \frac{z^n}{\sqrt{\rho^{(p,q)}(k)}}\mid k>\right)\nonumber\\
&=&\sum_{k=0}^\infty \frac{f^{(p,q)}(k)}{\sqrt{\mathcal{N}^{(p,q)}(\mid z\mid^2)}}\frac{z^{k+1}}{\sqrt{\rho^{(p,q)}(k+1)}}\mid k>\nonumber\\
&=&z\mid z>.\nonumber
\end{eqnarray}
 Thus, the Fox–Wright states $\mid z>$ defined in \eqref{eq:gh9} are eigenstates of the annihilation operator $\mathcal{A}_-$ corresponding eigenvalue $z$ confirming that they are coherent states. Next, we verify that Fox-Wright coherent states satisfy the following fundamental properties: 
 \begin{itemize}
    \item [(a)] Continuity: $|z'-z|\rightarrow 0$ implies that $||\mid z'>-\mid z>||\rightarrow 0$.
    The continuity property is satisfied since the series appearing in the definition of the coherent states is a continuous function of the complex variable $z$ over the entire complex plane.
    \item [(b)] Normalization: $<z \mid z>=1$. This property is a direct consequence of the previously derived expression \eqref{eq:xd6} for the overlap of two coherent states.
    \item[(c)] Resolution of unity: A positive weight function $W(|z|^2)$, is chosen of the integration measure 
$$d\mu(z)=\frac{d\theta}{2\pi}d(|z|^2)W(|z|^2),\;\;0<\theta<2\pi$$ such that the integral $\int d \mu(z)\mid z><z\mid=1$ holds. \end{itemize}
Let us take $\mid 0>$ denote the ground state satisfying $\mathcal{A}_- \mid 0>=0\mid 0>$. Now using \eqref{eq:cz7}, we obtain
\begin{align*}
    (\mathcal{A}_+)^k\mid 0>=\prod_{s=0}^{k-1}f(s)\mid k>
    =\sqrt{\rho^{(p,q)}(k)}\mid k>.
\end{align*}
Since, $\mathcal{A}_+$ represents the adjoint operator corresponding to $\mathcal{A}_-$, the relations below hold
\begin{align}\label{eq:csb7}
    \mid k>=\frac{1}{\sqrt{\rho^{(p,q)}(k)}}(\mathcal{A}_+)^k\mid 0>\;\; \mbox{and} \;\;< k \mid=\frac{1}{\sqrt{\rho^{(p,q)}(k)}} <0 \mid (\mathcal{A}_-)^k.
\end{align}
Using \eqref{eq:csb7} and \eqref{eq:gh9}, we obtain
\begin{align}\label{eq:z9x}
    \mid z>= \frac{1}{\sqrt{\mathcal{N}^{(p,q)}(\mid z\mid^2)}}\sum_{k=0}^\infty \frac{(z\mathcal{A}_+)^k}{{\rho^{(p,q)}(k)}}\mid 0>\;\;\mbox{and}\;\;<z \mid= \frac{1}{\sqrt{\mathcal{N}^{(p,q)}(\mid z\mid^2)}}<0 \mid\sum_{k=0}^\infty \frac{(z^\ast\mathcal{A}_-)^k}{{\rho^{(p,q)}(k)}}.
\end{align}
Using the DOOT method \cite{doot} and the completeness of Fock states, we derive the projector
 $\mid 0><0 \mid$ associated with the ground state $\mid 0>$ as follows
 \begin{align}\label{eq:cz9k}
     &\sum_{k=0}^{\infty}\mid k><k\mid=1\nonumber\\
    &\Rightarrow\sum_{k=0}^{\infty}\frac{1}{\rho^{(p,q)}(k)}\#(\mathcal{A}_+)^k\mid 0><0 \mid (\mathcal{A}_-)^k\#=1\nonumber\\
    &\Rightarrow \mid 0><0 \mid \sum_{k=0}^{\infty}\frac{1}{\rho^{(p,q)}(k)}\#  (\mathcal{A}_+\mathcal{A}_-)^{k} \#=1\nonumber\\
    &\Rightarrow \mid 0><0 \mid \#\mathcal{N}^{(p,q)}(\mathcal{A}_+\mathcal{A}_-)\#=1\nonumber\\
    &\Rightarrow \mid 0><0 \mid=\frac{\prod\limits_{l=1}^p\Gamma(a_l)}{\prod\limits_{r=1}^q \Gamma(b_r)\# {}_{p}\psi_{q}\left[\begin{matrix}&(a_1,A_1),\ldots &(a_p,A_p);\\& (b_1,B_1),\ldots&(b_q,B_q);&\end{matrix}\mathcal{A}_+\mathcal{A}_-\right]\#}.
      \end{align}
      Next, we determine the weight function $W(|z|^2)$ ensuring that the Fox-Wright coherent states satisfy the resolution of unity
\begin{align}\label{eq:cs9m}
    &\int d \mu(z)\mid z><z\mid=1.
\end{align}
Using \eqref{eq:z9x},\eqref{eq:cz9k} and \eqref{eq:cs9m}, we have
\begin{align}\label{eq:csl8}
    &\int  \frac{d\theta}{2\pi}d(|z|^2)W(|z|^2) \frac{1}{\mathcal{N}^{(p,q)}(\mid z\mid^2)}\sum_{k=0}^\infty \frac{\#(z\mathcal{A}_+)^k(z^\ast\mathcal{A}_-)^k\#}{{[\rho^{(p,q)}(k)]^2}}\mid 0><0\mid=1\nonumber\\
    &\Rightarrow \left[\int_{0}^{2\pi} \frac{d\theta}{2\pi}\right]\;\int_{0}^{\infty}d(|z|^2)\frac{W(|z|^2)}{\mathcal{N}^{(p,q)}\left(\mid z\mid^2\right)}\sum_{k=0}^\infty \frac{(|z|^2)^k}{{[\rho^{(p,q)}(k)]^2}}\#(\mathcal{A}_+\mathcal{A}_-)^n\#=\#\mathcal{N}^{(p,q)}(\mathcal{A}_+\mathcal{A}_-)\#\nonumber\\
    &\Rightarrow \sum_{k=0}^\infty \frac{\#(\mathcal{A}_+\mathcal{A}_-)^k\#}{{[\rho^{(p,q)}(k)]^2}}\int_{0}^{\infty}d(|z|^2)\frac{W(|z|^2)}{\mathcal{N}^{(p,q)}\left(\mid z\mid^2\right)}(|z|^2)^k=\#\mathcal{N}^{(p,q)}(\mathcal{A}_+\mathcal{A}_-)\#.
    \end{align}
    Setting the value of the normalization function in \eqref{eq:csl8} and comparing both side we obtain
    \begin{align}\label{eq:s9d}
        \int_{0}^{\infty}d(|z|^2)\frac{W(|z|^2)}{\mathcal{N}^{(p,q)}\left(\mid z\mid^2\right)}(|z|^2)^k=\rho^{(p,q)}(k) = \Gamma(k+1)\frac{\prod\limits_{l=1}^p\Gamma(a_l)}{\prod\limits_{r=1}^q \Gamma(b_r)}\frac{\prod\limits_{r=1}^{q}\Gamma(b_r+kB_r)}{\prod\limits_{l=1}^{p}\Gamma(a_l+kA_l)}.
    \end{align}
 Substituting $k=s-1$ in \eqref{eq:s9d} and using the Mellin transform of a $H$-function \cite{h_transform}:
  \begin{align}\label{eq:mlt7}
      \int\limits_{0}^\infty 
H_{m, n}^{l_1, l_2} \left[ x \;\middle|\; \begin{matrix} (a_1,\alpha_1), \ldots, (a_m,\alpha_m) \\ (b_1,\beta_1), \ldots, (b_n,\beta_n) \end{matrix} \right]x^{s-1} dx=\frac{\prod\limits_{j=1}^{l_1}\Gamma(b_j+s\beta_j)\prod\limits_{i=1}^{l_2}\Gamma(1-a_i-s\alpha_i)}{\prod\limits_{j=l_1+1}^n\Gamma(1-b_j-s\beta_j)\prod\limits_{i=l_2+1}^{m}\Gamma(a_i+s\alpha_i)},
  \end{align}
we obtain 
\begin{align*}
   \frac{W(|z|^2)}{\mathcal{N}^{(p,q)}\left(\mid z\mid^2\right)}&=\frac{\prod\limits_{l=1}^p\Gamma(a_l)}{\prod\limits_{r=1}^q \Gamma(b_r)}H_{p,q+1}^{q+1,0} \left[ |z|^2 \;\middle|\; \begin{matrix} &(a_{1}-A_{1},A_{1}),&(a_{2}-A_{2},A_{2}) \ldots, &(a_{p}-A_{p},A_{p}) \\ &(0,1),&(b_{1}-B_{1},B_{1}), \ldots, &(b_{q}-B_{q},B_{q}) \end{matrix} \right].
\end{align*}
 Therefore, weight function is given by
\begin{align*}
    W(|z|^2)&={}_{p}\psi_{q}\left[\begin{matrix}&(a_1,A_1),\ldots &(a_p,A_p);\\& (b_1,B_1),\ldots&(b_q,B_q);&\end{matrix}|z|^2\right]\\&\hspace{10pt}H_{p,q+1}^{q+1,0} \left[ |z|^2 \;\middle|\; \begin{matrix} &(a_{1}-A_{1},A_{1}),&(a_{2}-A_{2},A_{2}) \ldots, &(a_{p}-A_{p},A_{p}) \\ &(0,1),&(b_{1}-B_{1},B_{1}), \ldots, &(b_{q}-B_{q},B_{q}) \end{matrix} \right]>0,
\end{align*}
where the positivity of the corresponding $H$-function has been established in \cite{positive_h}.
Hence, the corresponding integration measure is given by
\begin{align*}
    d\mu(z)&=\frac{d\theta}{2\pi}d(|z|^2){}_{p}\psi_{q}\left[\begin{matrix}&(a_1,A_1),\ldots &(a_p,A_p);\\& (b_1,B_1),\ldots&(b_q,B_q);&\end{matrix}|z|^2\right]\\&\hspace{10pt}H_{p,q+1}^{q+1,0} \left[ |z|^2 \;\middle|\; \begin{matrix} &(a_{1}-A_{1},A_{1}),&(a_{2}-A_{2},A_{2}) \ldots, &(a_{p}-A_{p},A_{p}) \\ &(0,1),&(b_{1}-B_{1},B_{1}), \ldots, &(b_{q}-B_{q},B_{q}) \end{matrix} \right].
\end{align*}
D, Popov et al. \cite{coh_cont}, constructed coherent states for the continuous spectrum of a quantum system using a discrete-to-continuous limit. The same technique is employed here to define Fox-Wright coherent states for the continuous spectrum. Apply discrete continuous limit $d\rightarrow c$ in \eqref{eq:a8f} and \eqref{eq:vt9}, we obtain
\begin{align}\label{eq:a6i}
    \mathcal{\Tilde{N}}^{(p,q)}(\zeta)=\lim_{d\rightarrow c}\mathcal{N}^{(p,q)}(\zeta)&=\frac{\prod\limits_{r=1}^q \Gamma(b_r)}{\prod\limits_{l=1}^p\Gamma(a_l)}\int\limits_{0}^{\infty}dE\frac{\prod\limits_{l=1}^{p}\Gamma(a_l+EA_l)}{\prod\limits_{r=1}^{q}\Gamma(b_r+EB_r)}\frac{\zeta^E}{\Gamma(E+1)}
\end{align}
and 
\begin{align}
    \Tilde{\rho}^{(p,q)}(E)=\lim_{d\rightarrow c}\rho^{(p,q)}(k)=\Gamma(E+1)\frac{\prod\limits_{l=1}^p\Gamma(a_l)}{\prod\limits_{r=1}^q \Gamma(b_r)}\frac{\prod\limits_{r=1}^{q}\Gamma(b_r+EB_r)}{\prod\limits_{l=1}^{p}\Gamma(a_l+EA_l)}.
\end{align}
Setting $A_l=B_r=1$ in \eqref{eq:a6i} for $l=1,\ldots,p;r=1,\ldots,q$, we get
\begin{align}\label{eq:s9e}
    \mathcal{\Tilde{N}}^{(p,q)}(\zeta)=\frac{\prod\limits_{r=1}^q \Gamma(b_r)}{\prod\limits_{l=1}^p\Gamma(a_l)}\int\limits_{0}^{\infty}dE\frac{\prod\limits_{l=1}^{p}\Gamma(a_l+E)}{\prod\limits_{r=1}^{q}\Gamma(b_r+E)}\frac{\zeta^E}{\Gamma(E+1)}=\int\limits_{0}^{\infty}dE\frac{\prod\limits_{l=1}^{p}(a_l)_E}{\prod\limits_{r=1}^{q}(b_r)_E}\frac{\zeta^E}{\Gamma(E+1)}
\end{align}
where $(a)_E$ is the Pochhammer symbols given by $(a)_E=\frac{\Gamma(a+E)}{\Gamma(a)}$. It is observe that the integral in \eqref{eq:s9e} is generalized multi-parameter $\nu$-function, which has already been defined in \cite{int_mittage}. Since the generalized multi-parameter $\nu$-function represents a specific case of the integral \eqref{eq:a6i}, we refer to it as the FW-generalized multi-parameter $\nu$-function and denote it by $\mathcal{V}^{(p,q)}(\zeta)$. Therefore, $\mathcal{\Tilde{N}}^{(p,q)}(\zeta)=\mathcal{V}^{(p,q)}(\zeta)$. Similarly, apply discrete continuous limit $d\rightarrow c$ in \eqref{eq:gh9}, we have
\begin{align}
    \mid \Tilde{z}>=\lim_{d\rightarrow c}\mid z> =\frac{1}{\sqrt{\mathcal{V}^{(p,q)}(|z|^2)}}\int\limits_{0}^\infty dE \frac{z^E}{\sqrt{\Tilde{\rho}^{(p,q)}(E)}}\mid E>,
\end{align}
which is the expression of Fox-Wright coherent states for continuous spectrum. The overlap of two coherent states for continuous spectrum is given by
\begin{equation}
< \Tilde{z} \mid \Tilde{z}'>=\lim_{d\rightarrow c}< z\mid z'>=\frac{\mathcal{V}^{(p,q)}(z*z')}{\sqrt{\mathcal{V}^{(p,q)}(\mid z\mid^2)}\sqrt{\mathcal{V}^{(p,q)}(\mid z'\mid^2)}},\nonumber
\end{equation}
which implies the Fox-Wright coherent states for continuous spectrum are normalized.  Now, apply discrete continuous limit $d\rightarrow c$ in \eqref{eq:csb7}, we have
\begin{align}
    \mid E>=\frac{1}{\sqrt{\Tilde{\rho}^{(p,q)}(E)}}(\mathcal{A}_+)^E\mid 0>\;\; \mbox{and} \;\;< E \mid=\frac{1}{\sqrt{\Tilde{\rho}^{(p,q)}(E)}} <0 \mid (\mathcal{A}_-)^E,
\end{align}
and similarly to the discrete case, by employing the completeness relation for the continuous spectrum, We can directly derive the projector corresponding to the vacuum state $\mid 0><0 \mid$ in the following form 
\begin{align*}
    \int\limits_{0}^\infty dE\mid E><E \mid=1\Rightarrow\mid 0><0 \mid=\frac{1}{\# \mathcal{V}^{(p,q)}(\mathcal{A}_+\mathcal{A}_-)\#}.
\end{align*}
Similarly, we can easily prove the Fox-Wright coherent states for continuous spectrum satisfies resolution of unity
\begin{align*}
    \int d \Tilde{\mu}(z)\mid \Tilde{z}><\Tilde{z}\mid=1,
\end{align*}
where the integration measure for continuous spectrum is given by 
\begin{align*}
    d\Tilde{\mu}(z)&=\frac{\prod\limits_{l=1}^p\Gamma(a_l)}{\prod\limits_{r=1}^q \Gamma(b_r)}\frac{d\theta}{2\pi}d(|z|^2)\mathcal{V}^{(p,q)}(|z|^2)\\&\hspace{10pt}H_{p,q+1}^{q+1,0} \left[ |z|^2 \;\middle|\; \begin{matrix} &(a_{1}-A_{1},A_{1}),&(a_{2}-A_{2},A_{2}) \ldots, &(a_{p}-A_{p},A_{p}) \\ &(0,1),&(b_{1}-B_{1},B_{1}), \ldots, &(b_{q}-B_{q},B_{q}) \end{matrix} \right].
\end{align*}
For particular choices of parameters, the following distinct cases can be identified:
\begin{remark}\label{rm1}
    By setting $A_l=B_r=1$ for $l=1,2,\ldots,p;r=1,2,\ldots,q$, we get generalized hypergeometric coherent states $\mid z >$ with the corresponding normalization function
    \begin{align*}
        \mathcal{N}^{(p,q)}(\zeta)=\frac{\prod\limits_{r=1}^q \Gamma(b_r)}{\prod\limits_{l=1}^p\Gamma(a_l)}\sum_{k=0}^{\infty}\frac{\prod\limits_{l=1}^{p}\Gamma(a_l+k)}{\prod\limits_{r=1}^{q}\Gamma(b_r+k)}\frac{\zeta^k}{k!}
        =\sum_{k=0}^{\infty}\frac{\prod\limits_{l=1}^{p}(a_l)_k}{\prod\limits_{r=1}^{q}(b_r)_k}\frac{\zeta^k}{k!}={}_{p}F_q\left[\begin{matrix}&a_1,\ldots &a_p;\\&b_1,\ldots&b_q;&\end{matrix}\zeta\right]
    \end{align*}
    and weight function is given by
    \begin{align*}
        W(|z|^2)&={}_{p}\psi_{q}\left[\begin{matrix}&(a_1,1),\ldots &(a_p,1);\\& (b_1,1),\ldots&(b_q,1);&\end{matrix}|z|^2\right]H_{p,q+1}^{q+1,0} \left[ |z|^2 \;\middle|\; \begin{matrix} &(a_{1}-1,1),&(a_{2}-1,1) \ldots, &(a_{p}-1,1) \\ &(0,1),&(b_{1}-1,1), \ldots, &(b_{q}-1,1) \end{matrix}\right]\\
        &=\frac{\prod\limits_{l=1}^p\Gamma(a_l)}{\prod\limits_{r=1}^q \Gamma(b_r)}{}_{p}F_q\left[\begin{matrix}&a_1,\ldots &a_p;\\&b_1,\ldots&b_q;&\end{matrix}|z|^2\right]G_{p,q+1}^{q+1,0} \left[ |z|^2 \;\middle|\; \begin{matrix} &a_{1}-1,&a_{2}-1 \ldots, &a_{p}-1 \\ &0,&b_{1}-1, \ldots, &b_{q}-1 \end{matrix}\right]
    \end{align*}
    T. Appl et al. \cite{gh_cs}, constructed coherent states associated with the generalized hypergeometric function, which coincide with our results. This confirms the validity of our results.
\end{remark}
\begin{remark}
    By choosing $p=q=a_1=A_1=1$ the Mittage-Leffler coherent states $\mid z>$ are obtained, together with their associated normalization function
    \begin{align*}
        \mathcal{N}^{(p,q)}(\zeta)=\frac{\prod\limits_{r=1}^q \Gamma(b_r)}{\prod\limits_{l=1}^p\Gamma(a_l)} {}_{p}\psi_{q}\left[\begin{matrix}&(1,1);\\& (b_1,B_1);&\end{matrix}\zeta\right]=\frac{\prod\limits_{r=1}^q \Gamma(b_r)}{\prod\limits_{l=1}^p\Gamma(a_l)} E_{B_1,b_1}(\zeta)
    \end{align*}
    and weight function is given by
    \begin{align*}
         W(|z|^2)&=E_{B_1,b_1}(|z|^2)H_{1,2}^{2,0} \left[ |z|^2 \;\middle|\; \begin{matrix} & &(0,1) \\ &(0,1),&(b_{1}-B_1,B_1) \end{matrix}\right].\\
    \end{align*}
\end{remark}
\section{Bicomplex Fox-Wright function }\label{sec3}
\subsection{Preliminaries on Bicomplex numbers}
   The bicomplex number system is an extension of the complex number system, formed by introducing additional imaginary units together with complex coefficients. C. Segre \cite{le} introduced the concept of bicomplex numbers in 1892. The theory of bicomplex numbers has attracted growing attention in recent years. The set of bicomplex numbers emerge as a commutative alternative to the noncommutative quaternion skew field, which is another four-dimensional real space. Unlike quaternions, bicomplex numbers have commutative multiplication and form a ring with
zero divisors. Extensive studies have explored the properties of bicomplex numbers, covering their algebraic and geometric aspects along with several applications see \cite{bc_number,multicomplex,hyperbolic}.
The set $\mathbb{BC}$ of bicomplex numbers is given by
\begin{align*}\mathbb{BC}&=\{Z=a+jb=x_1+ix_2+jx_3+ijx_4:x_1,x_2,x_3,x_4\in \mathbb{R}\},
\end{align*}
where $i$ and $j$ satisfies the conditions $i^2=j^2=-1,ij=ji$ with $a=x_1+ix_2,b=x_3+ix_4$.
Every bicomplex number $Z=a+jb\in \mathbb{BC} $ admits a unique representation of the following form  $$ Z=(a-ib)e_1+(a+ib)e_2=z_1e_1+z_2 e_2,$$ where $e_1=\frac{1+ij}{2}$  and \; $e_2=\frac{1-ij}{2}$, which satisfies the identities $e_1+e_2=1, e_1-e_2=ij, e_1^{2}=e_1$ and $e_2^{2}=e_2$. This form is known as the idempotent representation of the bicomplex number $Z\in\mathbb{BC}$. 
The space $\mathbb{BC}$ forms a commutative ring with unity that contains zero divisors. The collection of all zero divisors is given by \cite{bc_holomorphic} : $$\mathbb{O}_2=\{Z=a+jb:a^2+b^2=0\},$$ 
and its elements are known as singular elements.
 
The hyperbolic number system is defined as
$\mathbb{D}=\{Z=x_1+ijx_4:x_1,x_4\in\mathbb{R}\}$. Inside $\mathbb{D}$ the subsets of non-negative and non-positive hyperbolic numbers can be introduced as follows:
\begin{align*}
       \mathbb{D}^+=\{Z=x_1+ijx_4:x_1^2-x_4^2\geq0,x_1\geq0\}\quad \mbox{and}\quad \mathbb{D}^-=\{Z=x_1+ijx_4:x_1^2-x_4^2\geq0,x_1\leq0\},
   \end{align*}respectively. Furthermore, a partial order relation $<_h$ is defined on the set $\mathbb{D}$ and if $Z,W\in\mathbb{D}$ and $Z<_hW$, it implies that $W-Z\in\mathbb{D}^+$ and $Z-W\in\mathbb{D}^-$. 
 The hyperbolic norm $|\cdot|_h$ for any bicomplex number $Z=z_1e_1+z_2 e_2$, is given by\cite{bc_holomorphic}: $$|Z|_h=|z_1|e_1+|z_2|e_2.$$
 Using the hyperbolic norm, the hyperbolic ball with center $p$ and radius $R$ is defined as follows $B_h(p,R)=\{Z:|Z-p|<_hR\}$.
 The following expression represents the Euler product form of the bicomplex gamma function \cite{gamma}:
    \begin{align*}
        \Gamma_b(W)=\frac{1}{We^{\gamma W}}\prod_{n=1}^{\infty}\left(\left(\frac{n}{n+W}\right)\exp\left(\frac{W}{n}\right)\right),
    \end{align*}
    where $W=c+jd=w_1e_1+w_2e_2\in{\mathbb{BC}}$,
   with $c\not=-\frac{(p+q)}{2}$, $d\not=i\frac{(q-p)}{2}$ where  $p,q\in \mathbb{N}\cup \{0\}$ and $\gamma$ denotes the Euler constant \cite{sp_function}. Additionally, idempotent representation of the bicomplex gamma function is given by 
    \begin{align}\label{eq:g}
        \Gamma_b(W)=\Gamma(w_1) e_1+ \Gamma(w_2) e_2.
    \end{align} 
 
    \begin{lemma}\rm{\cite{higher_trans}}{(Stirling’s formula)}\label{asygamma}
        The asymptotic formula for the Gamma function is given as follows
        \begin{align*}
    \Gamma(w)=\sqrt{2\pi}\;w^{w-\frac{1}{2}}e^{-w}\left[1+O\left(\frac{1}{w}\right)\right],\;\;\text{as}\;\;|w|\rightarrow\infty. 
\end{align*}
    \end{lemma}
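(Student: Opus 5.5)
The statement is the classical Stirling formula, which the paper quotes from \cite{higher_trans}. I would prove it in the form in which it is actually used: uniformly for $|w|\to\infty$ in any sector $|\arg w|\le \pi-\delta$ with $\delta>0$, using the principal branch of $w^{w-\frac12}=\exp\big((w-\tfrac12)\log w\big)$. Some such restriction is necessary, since $\Gamma$ has poles on the negative real axis. The plan is to establish the logarithmic version
\begin{align*}
\log\Gamma(w)=\Big(w-\tfrac12\Big)\log w-w+\tfrac12\log(2\pi)+R(w),\qquad R(w)=O\Big(\frac{1}{|w|}\Big),
\end{align*}
and then exponentiate. Since $R(w)\to 0$, we have $e^{R(w)}=1+O(R(w))=1+O(1/w)$, which gives exactly the stated bracket.

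\textbf{Step 1 (integral formula for $\operatorname{Re}w>0$).} Starting from the Weierstrass/Euler product for $\Gamma$ (the complex counterpart of the product recalled above for $\Gamma_b$), I would take logarithms and write $\log\Gamma(w)$ as a limit of finite sums of $\log(n+w)$. I would then use Frullani-type integrals
\begin{align*}
\log x=\int_0^\infty \frac{e^{-t}-e^{-xt}}{t}\,dt
\end{align*}
and sum the resulting geometric series. This yields Binet's first formula
\begin{align*}
R(w)=\int_0^\infty\Big(\frac12-\frac1t+\frac{1}{e^t-1}\Big)\frac{e^{-wt}}{t}\,dt,\qquad \operatorname{Re} w>0 .
\end{align*}
The additive constant comes out as some number $C$, to be identified as $\tfrac12\log 2\pi$ later.

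\textbf{Step 2 (estimate of $R$).} The kernel $\phi(t)=\big(\tfrac12-\tfrac1t+\tfrac1{e^t-1}\big)/t$ is smooth on $[0,\infty)$, tends to $\tfrac1{12}$ as $t\to0$, and is $O(t^{-2})$ as $t\to\infty$. Integrating by parts once therefore gives $R(w)=\phi(0)/w+\frac1w\int_0^\infty\phi'(t)e^{-wt}\,dt$. This is $O(1/|w|)$ uniformly for $|\arg w|\le\frac\pi2-\delta$.

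To reach the full sector $|\arg w|\le\pi-\delta$, I would rotate the contour of integration to the ray $\arg t=-\theta$ with $|\theta|<\frac\pi2$. This is legitimate because $\phi$ is analytic in $|\arg t|<\frac\pi2$ away from the poles $2\pi i k$ on the imaginary axis, and it is bounded there by $O(|t|^{-1})$. This step, and the fact that it preserves the identity by analytic continuation, is the main obstacle. The delicate point is the uniformity of the bound near the imaginary axis, which is what forces the margin $\delta$.

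\textbf{Step 3 (the constant).} To identify $C=\tfrac12\log 2\pi$, I would insert the asymptotic expansion into Legendre's duplication formula $\Gamma(w)\Gamma(w+\tfrac12)=2^{1-2w}\sqrt\pi\,\Gamma(2w)$ and let $w\to+\infty$ along the reals; comparing constants forces $2C-C=\tfrac12\log 2+\tfrac12\log\pi$. Equivalently, one could use Wallis' product. Exponentiating then completes the proof.
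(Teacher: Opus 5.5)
The paper gives no proof of this lemma. It is quoted from \cite{higher_trans} and used as a black box in the proof of Theorem~\ref{th:a5}, so your proposal is a genuinely different, self-contained route. It is the classical Binet-formula argument: an integral representation for $\operatorname{Re} w>0$, one integration by parts, rotation of the ray of integration, and the duplication formula to identify the constant $\tfrac12\log 2\pi$. The outline is correct.

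It also supplies something the citation does not: the right hypothesis. As stated in the paper, ``as $|w|\to\infty$'' is false without a sector restriction. Along the negative real axis, where $\Gamma$ has poles, the ratio $|\Gamma(w)|/|\sqrt{2\pi}\,w^{w-\frac12}e^{-w}|$ behaves like $1/(2|\sin\pi w|)$, which does not tend to $1$. Your restriction $|\arg w|\le\pi-\delta$ is therefore the correct formulation.

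For the only use the paper makes of the lemma, $w=\mu_{pi}+kM_{pi}$ with $M_{pi}>0$ and $k\to\infty$, so $\operatorname{Re} w\to+\infty$. Your right-half-plane estimate already suffices there. The contour rotation you identify as the main obstacle is needed only for the full-sector statement.

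There is one factual slip in Step 2. Since $1/(e^t-1)$ decays exponentially, $\phi(t)=\frac{1}{2t}-\frac{1}{t^2}+O(e^{-t}/t)$. So $\phi$ is $O(t^{-1})$, not $O(t^{-2})$, as $t\to\infty$; you state the correct rate later, in the rotation step. It is $\phi'$ that is $O(t^{-2})$, and that is exactly what your integration by parts needs: $\phi$ bounded with $\phi(t)e^{-wt}\to0$, and $\phi'\in L^1(0,\infty)$. Nothing breaks.

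The half-plane bound is in fact stronger than you claim. Since $|e^{-wt}|\le 1$, you get $|R(w)|\le(\phi(0)+\|\phi'\|_{L^1})/|w|$ uniformly on all of $\operatorname{Re} w>0$, with no margin $\delta$ needed there. The margin only enters once you rotate the ray toward the poles $2\pi i k$.
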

\subsection{Bicomplex Fox-Wright function}
In the following, we introduce the bicomplex Fox-Wright function and examine its existence within the bicomplex domain.  We introduce the bicomplex Fox–Wright function in the following form
 \begin{align}\label{eq:fw}
        {}_{m}\Psi_{n}\left[\begin{matrix}&(\mu_1,M_1),\ldots &(\mu_m,M_m);\\&(\nu_1,N_1),\ldots&(\nu_n,N_n);&\end{matrix}Z\right]=
  \sum_{k=0}^{\infty}\frac{\prod\limits_{i=1}^{m}\Gamma_b(\mu_i+kM_i)}{\prod\limits_{j=1}^{n}\Gamma_b(\nu_j+kN_j)}\frac{Z^k}{k!},
    \end{align}
where $Z=z_1e_1+z_2e_2,\mu_i=\mu_{1i}e_1+\mu_{2i}e_2,\nu_j=\nu_{1j}e_1+\nu_{2j}e_2\in\mathbb{BC}$ and $M_i=M_{1i}e_1+M_{2i}e_2,N_j=N_{1j}e_1+N_{2j}e_2\in\mathbb{D}^+-\mathbb{O}_2$, for $i=1,2,\ldots,m;\;j=1,2,\ldots,n$.

 \begin{theo}\label{th:a5}
     Assume that $Z=z_1e_1+z_2e_2,\mu_i=\mu_{1i}e_1+\mu_{2i}e_2,\nu_j=\nu_{1j}e_1+\nu_{2j}e_2\in\mathbb{BC}$, $M_i=M_{1i}e_1+M_{2i}e_2,N_j=N_{1j}e_1+N_{2j}e_2\in\mathbb{D}^+-\mathbb{O}_2$ for $i=1,2,\ldots,m$; $j=1,2,\ldots,n$, and
     \begin{align*}
         &\Upsilon=\Upsilon_1e_1+\Upsilon_2e_2=\sum_{j=1}^nN_j-\sum_{i=1}^mM_i,\\
         &\mathcal{V}=\mathcal{V}_1e_1+\mathcal{V}_2e_2=\prod_{j=1}^n|N_j|_h^{N_j}\prod_{i=1}^m|M_i|_h^{-M_i}\\
         &\Lambda=\Lambda_1+j\Lambda_2=\lambda_1e_1+\lambda_2e_2=\sum_{j=1}^{n}\nu_j-\sum_{i=1}^m\mu_i-\frac{n-m}{2}.
     \end{align*}
    \begin{enumerate}
        \item [(i)] If $\Upsilon>_h-1$, then the series \eqref{eq:fw} is absolutely hyperbolic convergent in the bicomplex space $\mathbb{BC}$.
        \item [(ii)] If $\Upsilon_1=-1$ and $\Upsilon_2>-1$, then the series \eqref{eq:fw} absolutely hyperbolic converges inside $\mathcal{B}_1(0,R)=\{z_1e_1+z_2e_2:|z_1|<\mathcal{V}_1\; \text{and}\;z_2\in\mathbb{C}\; \}$.
        \item [(iii)] If $\Upsilon_1>-1$ and $\Upsilon_2=-1$, then the series \eqref{eq:fw} converges absolutely in the hyperbolic sense within $\mathcal{B}_2(0,R)=\{z_1e_1+z_2e_2:z_1\in\mathbb{C}\; \text{and}\;|z_2|<\mathcal{V}_2\}$.
        \item [(iv)] If $\Upsilon_1=-1$ and $\Upsilon_2<-1$, then the series \eqref{eq:fw} absolutely hyperbolic convergent within the set $\mathcal{B}_3(0,R)=\{z_1e_1+z_2e_2:|z_1|<\mathcal{V}_1\; \text{and}\;z_2=0\; \}$.
        \item [(v)] If $\Upsilon_1<-1$ and $\Upsilon_2=-1$, then the series \eqref{eq:fw} absolutely hyperbolic converges inside $\mathcal{B}_4(0,R)=\{z_1e_1+z_2e_2:z_1=0\; \text{and}\;|z_2|<\mathcal{V}_2\}$.
        \item [(vi)] If $\Upsilon_1>-1$ and $\Upsilon_2<-1$, then the series \eqref{eq:fw} absolutely hyperbolic convergent within the set $\mathcal{B}_5(0,R)=\{z_1e_1+z_2e_2:z_1\in\mathbb{C}\; \text{and}\;z_2=0\}$.
        \item [(vii)] If $\Upsilon_1<-1$ and $\Upsilon_2>-1$, then the series \eqref{eq:fw} is absolutely hyperbolically convergent in the set $\mathcal{B}_6(0,R)=\{z_1e_1+z_2e_2:z_1=0\; \text{and}\;z_2\in\mathbb{C}\}$.
        \item [(viii)] If $\Upsilon=-1$, then the series \eqref{eq:fw} absolutely hyperbolic converges inside  $\mathcal{B}_h(0,\mathcal{V}
        )=\{z_1e_1+z_2e_2:|z_1|<\mathcal{V}_1\; \text{and}\;|z_2|<\mathcal{V}_2\}$. Moreover, on the boundary of the hyperbolic ball $\mathcal{B}_h(0,\mathcal{V})$ the series \eqref{eq:fw} hyperbolic uniformly and absolutely convergent if
        \begin{align}\label{eq:so}
        \Re(\Lambda_1)-\frac{1}{2}>|\Im(\Lambda_2)|.
        \end{align}
        \item [(ix)] If $\Upsilon<_h-1$, then the series \eqref{eq:fw}is divergent for all nonzero elements $Z$ in $\mathbb{BC}$.
    \end{enumerate}
\end{theo}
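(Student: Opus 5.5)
The plan is to reduce everything to the idempotent components. Write $Z=z_1e_1+z_2e_2$, $\mu_i=\mu_{1i}e_1+\mu_{2i}e_2$, and so on. Then $Z^k=z_1^ke_1+z_2^ke_2$, and by \eqref{eq:g} we have $\Gamma_b(\mu_i+kM_i)=\Gamma(\mu_{1i}+kM_{1i})e_1+\Gamma(\mu_{2i}+kM_{2i})e_2$. Since $e_1,e_2$ are orthogonal idempotents, quotients and products split componentwise, and the series \eqref{eq:fw} becomes
\begin{align*}
{}_{m}\Psi_{n}[\,\cdot\,;Z]={}_{m}\psi_{n}\left[\begin{matrix}(\mu_{1i},M_{1i})\\(\nu_{1j},N_{1j})\end{matrix};z_1\right]e_1+{}_{m}\psi_{n}\left[\begin{matrix}(\mu_{2i},M_{2i})\\(\nu_{2j},N_{2j})\end{matrix};z_2\right]e_2 .
\end{align*}
Because $M_i,N_j\in\mathbb{D}^+-\mathbb{O}_2$, every component $M_{1i},M_{2i},N_{1j},N_{2j}$ is a positive real number, so each component is a classical Fox--Wright series \eqref{eq:cfw1}. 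Moreover $|W|_h=|w_1|e_1+|w_2|e_2$, so absolute hyperbolic convergence of \eqref{eq:fw} is equivalent to absolute convergence of both complex component series. The same splitting gives $\Upsilon_\ell=\sum_j N_{\ell j}-\sum_i M_{\ell i}$ and $\mathcal{V}_\ell=\prod_j N_{\ell j}^{N_{\ell j}}\prod_i M_{\ell i}^{-M_{\ell i}}$ for $\ell=1,2$.

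Next I would prove the one-variable trichotomy for each component by the ratio test. This should be done directly rather than just citing \eqref{eq:vt5}, because the boundary case and the divergence case are needed as well. Put $c_k=\prod_i\Gamma(\mu_{\ell i}+kM_{\ell i})/\big(\prod_j\Gamma(\nu_{\ell j}+kN_{\ell j})\,k!\big)$. Applying Lemma \ref{asygamma} gives
\begin{align*}
\left|\frac{c_{k+1}}{c_k}\right|\sim \frac{\prod_i M_{\ell i}^{M_{\ell i}}}{\prod_j N_{\ell j}^{N_{\ell j}}}\,k^{-1-\Upsilon_\ell}\qquad(k\to\infty).
\end{align*}
This yields three cases. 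If $\Upsilon_\ell>-1$, the series converges for all $z_\ell\in\mathbb{C}$. If $\Upsilon_\ell=-1$, the radius of convergence is exactly $\mathcal{V}_\ell$. If $\Upsilon_\ell<-1$, the series diverges for every $z_\ell\neq0$, and at $z_\ell=0$ it trivially converges.

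Items (i)--(ix) then follow by combining the two components. The admissible set is the product of the component sets $\mathbb{C}$, $\{|z_\ell|<\mathcal{V}_\ell\}$, or $\{0\}$, according as $\Upsilon_\ell>-1$, $=-1$, or $<-1$. This gives $\mathbb{BC}$, $\mathcal{B}_1,\dots,\mathcal{B}_6$, and $\mathcal{B}_h(0,\mathcal{V})$ respectively. For (ix), with both $\Upsilon_\ell<-1$, the hyperbolic series converges only if both components vanish. So for a nonzero $Z$ at least one component series diverges. (A zero divisor $Z$ with one vanishing component is strictly still fine, and I would remark on this.)

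The main obstacle is the boundary statement in (viii). On $|z_\ell|=\mathcal{V}_\ell$ I would refine the Stirling expansion to obtain $|c_k|\mathcal{V}_\ell^k\asymp k^{-\Re(\lambda_\ell)-1/2}$ up to constants. Here $\lambda_\ell$ comes from the $w^{w-1/2}$ factors: the $-\tfrac12$ terms produce the $-(n-m)/2$ shift in $\Lambda$, and the $k!$ contributes an extra $k^{-1/2}$. Uniform absolute convergence on the circle then requires $\Re(\lambda_\ell)+\tfrac12>1$, i.e. $\Re(\lambda_\ell)>\tfrac12$ for $\ell=1,2$. Finally I would translate back with $\lambda_1=\Lambda_1-i\Lambda_2$ and $\lambda_2=\Lambda_1+i\Lambda_2$. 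This gives $\Re\lambda_{1,2}=\Re\Lambda_1\pm\Im\Lambda_2$, so both inequalities together are exactly $\Re(\Lambda_1)-\tfrac12>|\Im(\Lambda_2)|$, which is \eqref{eq:so}. The delicate part is tracking the exponents of $k$ carefully in Stirling's formula, including the $k!$ normalization, so that the threshold comes out as $\tfrac12$.
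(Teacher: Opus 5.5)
Your proposal is correct and follows essentially the paper's route: split into idempotent components, use Stirling's formula to get each component radius ($\infty$, $\mathcal{V}_\ell$, or $0$ according as $\Upsilon_\ell+1$ is positive, zero, or negative), and on the boundary use $|c_k|\mathcal{V}_\ell^k\asymp k^{-\Re\lambda_\ell-1/2}$ with an M-test, translated back via $\Re\lambda_{1,2}=\Re\Lambda_1\pm\Im\Lambda_2$. The only differences are cosmetic: you use the ratio test where the paper uses the hyperbolic root test, and you track the real parts $\Re\lambda_\ell$ explicitly, which is the cleaner version of the paper's computation (the paper carries complex exponents at that step).
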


\begin{proof}
Let $R=R_1e_1+R_2e_2$ be defined as the hyperbolic radius of convergent of the bicomplex power series \eqref{eq:fw}
\begin{align*}
\sum_{k=0}^{\infty}\frac{\prod\limits_{i=1}^{m}\Gamma_b(\mu_i+kM_i)}{\prod\limits_{j=1}^{n}\Gamma_b(\nu_j+kN_j)}\frac{Z^k}{k!}=\sum_{k=0}^{\infty}a_kZ^k=\sum_{k=0}^{\infty}a_{1k}z_1^ke_1+\sum_{k=0}^{\infty}a_{2k}z_2^ke_2.
\end{align*}
By applying Lemma \ref{asygamma}, we obtain the asymptotic behavior of $\Gamma(\mu_{pi}+kM_{pi})$ and $\Gamma(\nu_{pj}+kN_{pj})$ for $p=1,2$, as given by
\begin{align}\label{eq:cp}
    \Gamma(\mu_{pi}+kM_{pi})\sim \sqrt{2\pi} \left(\mu_{pi}+kM_{pi}\right)^{\mu_{pi}+kM_{pi}-\frac{1}{2}} e^{-(\mu_{pi}+kM_{pi})}\;\; \text{as}\;\;k\rightarrow\infty,
\end{align}
 and 
\begin{align}\label{eq:sp}
    \Gamma(\nu_{pj}+kN_{pj})\sim\sqrt{2\pi} \left(\nu_{pj}+kN_{pj}\right)^{\nu_{pj}+kN_{pj}-\frac{1}{2}} e^{-(\nu_{pj}+kN_{pj})} \;\;\text{as}\;\;k\rightarrow\infty,
\end{align}
where $i=1,2,\ldots,m;j=1,2,\ldots,n$.
Then by using hyperbolic root test \cite{bc_root}, \eqref{eq:cp}and \eqref{eq:sp}, we obtain
\begin{align}\label{eq:mi}
     \frac{1}{R}&=\lim_{k\to \infty} \sup|a_k|^{\frac{1}{k}}_h\nonumber\\
    &=\lim_{k\to \infty} \sup|a_{1k}|^{\frac{1}{k}}e_1+\lim_{k\to \infty} \sup|a_{2k}|^{\frac{1}{k}}e_2\nonumber\\
    &=\sum_{p=1}^2\lim_{k\to\infty}\left|\frac{\prod\limits_{i=1}^{m}\Gamma(\mu_{pi}+kM_{pi})}{\prod\limits_{j=1}^{n}\Gamma(\nu_{pj}+kN_{pj})\Gamma(k+1)}\right|^{\frac{1}{k}}e_p\nonumber\\
    &=\sum_{p=1}^2\lim_{k\to\infty}\left|\frac{\prod\limits_{i=1}^{m}\sqrt{2\pi} \left(\mu_{pi}+kM_{pi}\right)^{\mu_{pi}+kM_{pi}-\frac{1}{2}} e^{-(\mu_{pi}+kM_{pi})}}{\prod\limits_{j=1}^{n}\sqrt{2\pi} \left(\nu_{pj}+kN_{pj}\right)^{\nu_{pj}+kN_{pj}-\frac{1}{2}} e^{-(\nu_{pj}+kN_{pj})}\times\sqrt{2\pi}\;(k+1)^{k+\frac{1}{2}}e^{-k-1}}\right|^{\frac{1}{k}}e_p\nonumber\\
    &=\sum_{p=1}^2\lim_{k\to\infty}\frac{\prod\limits_{i=1}^{m} \left(k|M_{pi}|\right)^{M_{pi}} e^{-M_{pi}}}{\prod\limits_{j=1}^{n} \left(k|N_{pj}|\right)^{N_{pj}} e^{-N_{pj}}\times ke^{-1}}e_p\nonumber\\
    &=\sum_{p=1}^2\frac{e^{\Upsilon_p+1}}{\mathcal{V}_p}\lim_{k\to\infty}\frac{1}{k^{\Upsilon_p+1}}e_p.
\end{align}
From \eqref{eq:mi}, we get 
\begin{enumerate}
    \item [(i)] If $\Upsilon_p>-1$ for $p=1,2$, we obtain $R_1=\infty$ and $R_2=\infty$ which leads to $R=\infty$. Therefore the series \eqref{eq:fw} is absolutely hyperbolic convergent in $\mathbb{BC}$.
    \item [(ii)] If $\Upsilon_1=-1$ and $\Upsilon_2>-1$, then $R_1=\mathcal{V}_1$ and $R_2=\infty$, so the series \eqref{eq:fw} absolutely hyperbolic converges inside $\mathcal{B}_1(0,R)=\{Z=z_1e_1+z_2e_2:|z_1|<\mathcal{V}_1\; \text{and}\;z_2\in\mathbb{C}\; \}$.
    \item [(iii)] If $\Upsilon_1>-1$ and $\Upsilon_2=-1$, then $R_1=\infty$ and $R_2=\mathcal{V}_2$ so, the series \eqref{eq:fw} absolutely hyperbolic converges inside $\mathcal{B}_2(0,R)=\{Z=z_1e_1+z_2e_2:Z_1\in\mathbb{C}\; \text{and}\;|z_2|<\mathcal{V}_2\}$.
    \item [(iv)] If $\Upsilon_1=-1$ and $\Upsilon_2<-1$, then $R_1=\mathcal{V}_1$ and $R_2=0$. Therefore the series \eqref{eq:fw} absolutely hyperbolic converges in $\mathcal{B}_3(0,R)=\{Z=z_1e_1+z_2e_2:|z_1|<\mathcal{V}_1\; \text{and}\;z_2=0\; \}$. This set corresponds to a disk $D_{e_1}\left(0,\frac{\mathcal{V}_1}{\sqrt{2}}\right)$ in the two-dimensional plane $\mathbb{BC}_{e_1}$ with center at the origin and radius $\frac{\mathcal{V}_1}{\sqrt{2}}$ see Fig. \ref{fig:fwdisk1}(A).
    \item [(v)] If $\Upsilon_1<-1$ and $\Upsilon_2=-1$, then $R_1=0$ and $R_2=\mathcal{V}_2$; consequently, the series \eqref{eq:fw} absolutely hyperbolic converges in $\mathcal{B}_4(0,R)=\{z_1e_1+z_2e_2:z_1=0\; \text{and}\;|z_2|<\mathcal{V}_2\}$, this set represent a disk $D_{e_2}\left(0,\frac{\mathcal{V}_2}{\sqrt{2}}\right)$ in another two-dimensional plane $\mathbb{BC}_{e_2}$ with center at the origin and radius $\frac{\mathcal{V}_2}{\sqrt{2}}$ see Fig. \ref{fig:fwdisk1}(B).
    \begin{figure}[ht!]
   \centering
   \begin{subfigure}{0.35\linewidth}
     \includegraphics[width=\linewidth]{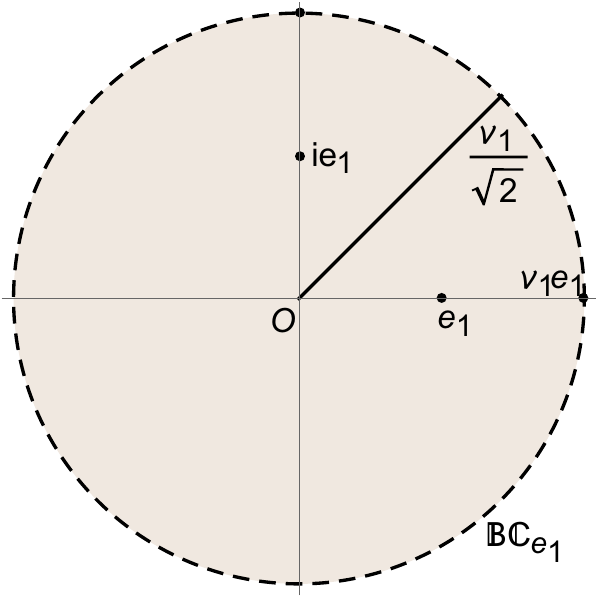}
    \caption{Disk $D_{e_1}\left(0,\frac{\mathcal{V}_1}{\sqrt{2}}\right)$}
  \end{subfigure}
   \begin{subfigure}{0.35\linewidth}
     \includegraphics[width=\linewidth]{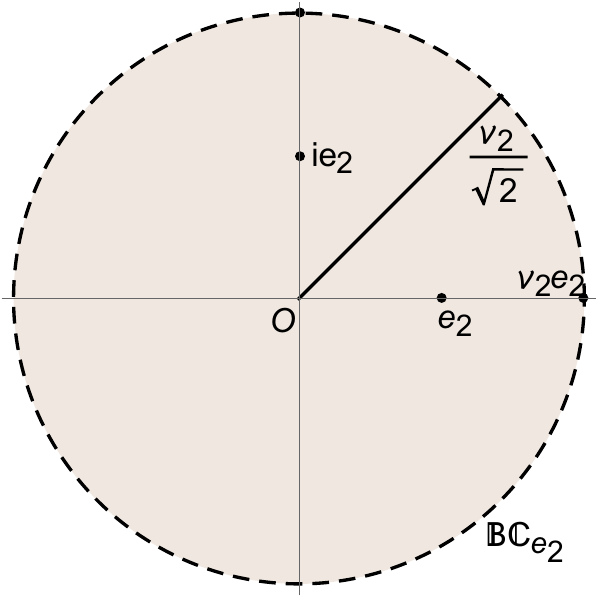}
     \caption{Disk $D_{e_2}\left(0,\frac{\mathcal{V}_2}{\sqrt{2}}\right)$}
   \end{subfigure}
   \caption{Image of $D_{e_1}(0,\frac{\mathcal{V}_1}{\sqrt{2}})$ and $D_{e_2}(0,\frac{\mathcal{V}_2}{\sqrt{2}})$ in $\mathbb{BC}_{e_1}$ and $\mathbb{BC}_{e_2}$ respectively}
   \label{fig:fwdisk1}
\end{figure}
    \item [(vi)]  If $\Upsilon_1>-1$ and $\Upsilon_2<-1$, then $R_1=\infty$ and $R_2=0$; so, the series \eqref{eq:fw} absolutely hyperbolic converges inside $\mathcal{B}_5(0,R)=\{z_1e_1+z_2e_2:z_1\in\mathbb{C}\; \text{and}\;z_2=0\}$, is a usual two dimensional plane that passes through origin and parallel to $\mathbb{BC}_{e_1}$.
    \item [(vii)] If $\Upsilon_1<-1$ and $\Upsilon_2>-1$, then $R_1=0$ and $R_2=\infty$; hence, the series \eqref{eq:fw} absolutely hyperbolic converges in the set $\mathcal{B}_6(0,R)=\{z_1e_1+z_2e_2:z_1=0\; \text{and}\;z_2\in\mathbb{C}\}$, is a usual two dimensional plane that passes through origin and parallel to $\mathbb{BC}_{e_2}$.
    \item [(viii)] If $\Upsilon=-1$, then $R_1=\mathcal{V}_1$ and $R_2=\mathcal{V}_2$; so, the series \eqref{eq:fw} hyperbolic converges absolutely inside the hyperbolic ball $\mathcal{B}_h(0,R)=\{Z=z_1e_1+z_2e_2:|z_1|<\mathcal{V}_1\; \text{and}\;|z_2|<\mathcal{V}_2\; \}$ and diverges on the complement of its closure. The hyperbolic ball $\mathcal{B}_h(0,R)$ is represented as the Cartesian product of the two disk $\mathrm{D}_{e_1}\left(0,\frac{\mathcal{V}_1}{\sqrt{2}}\right)$ and $\mathrm{D}_{e_2}\left(0,\frac{\mathcal{V}_2}{\sqrt{2}}\right)$, which are situated in the planes $\mathbb{BC}_{e_1}$ and $\mathbb{BC}_{e_2}$ respectively see Fig. \ref{fig:fwdisk2}. Although Fig. \ref{fig:fwdisk2} depicts a four-dimensional object, it is illustrated as the interior of a three-dimensional object for the sake of abstract visualization. 
     \begin{figure}[ht!]
   \centering
   \begin{subfigure}{0.45\linewidth}
     \includegraphics[width=\linewidth]{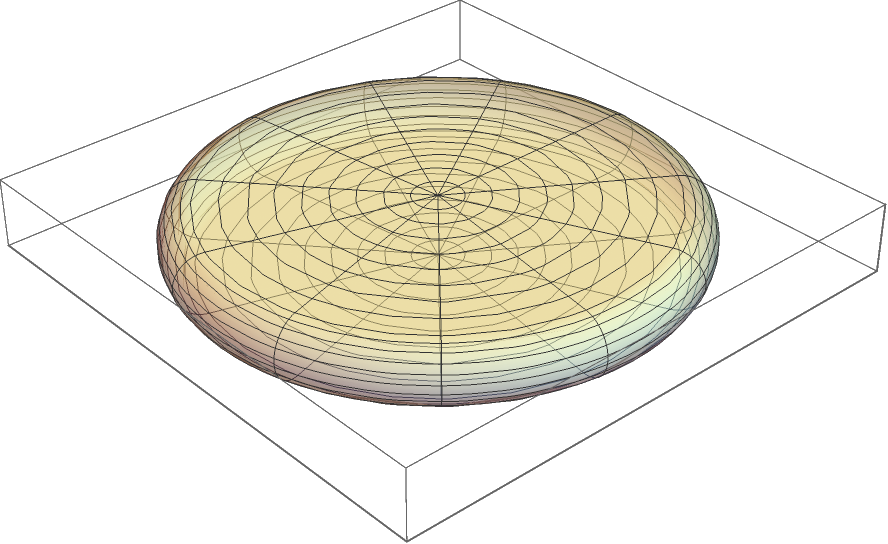}
   \end{subfigure}
   \caption{Outer surface of the ball $\mathcal{B}_h(0,R)$}
   \label{fig:fwdisk2}
\end{figure}
    \item [(ix)] If $\Upsilon<_h-1$, then $R=0$; therefore, the series \eqref{eq:fw} divergent for every nonzero $Z\in \mathbb{BC}$.
\end{enumerate}
Next, we investigate the behavior of the series \eqref{eq:fw} along the boundary of the hyperbolic ball $\mathcal{B}_h(0,R)=\{z_1e_1+z_2e_2:|z_1|<\mathcal{V}_1\; \text{and}\;|z_2|<\mathcal{V}_2\; \}$ corresponding to the case $\Upsilon=-1$. In this case $|z_1|=\mathcal{V}_1$, $|z_2|=\mathcal{V}_2$ and from the inequality \eqref{eq:so} we obtain
\begin{align}\label{eq:x4}
    \Re(\Lambda_1)+\Im(\Lambda_2)>\frac{1}{2}\implies \Re(\lambda_1)>\frac{1}{2}
\end{align}
and
\begin{align}\label{eq:s6}
    \Re(\Lambda_1)-\Im(\Lambda_2)>\frac{1}{2}\implies \Re(\lambda_2)>\frac{1}{2}.
\end{align}
Using \eqref{eq:cp} and \eqref{eq:sp}, we get
\begin{align}
    |a_kZ^k|_h&=|a_{1k}z_1^k|e_1+|a_{2k}z_2^k|e_2\nonumber\\
    &=\sum_{p=1}^2\left|\frac{\prod\limits_{i=1}^{m}\Gamma(\mu_{pi}+kM_{pi})}{\prod\limits_{j=1}^{n}\Gamma(\nu_{pj}+kN_{pj})\Gamma(k+1)}\right||z_p|^ke_p\nonumber\\&\sim\sum_{p=1}^2\left|\frac{\prod\limits_{i=1}^{m}\sqrt{2\pi} \left(\mu_{pi}+kM_{pi}\right)^{\mu_{pi}+kM_{pi}-\frac{1}{2}} e^{-(\mu_{pi}+kM_{pi})}}{\prod\limits_{j=1}^{n}\sqrt{2\pi} \left(\nu_{pj}+kN_{pj}\right)^{\nu_{pj}+kN_{pj}-\frac{1}{2}} e^{-(\nu_{pj}+kN_{pj})}\times\sqrt{2\pi}\;(k+1)^{k+\frac{1}{2}}e^{-k-1}}\right|\mathcal{V}_p^ke_p\nonumber\\
    &=(2\pi)^{\frac{m-n-1}{2}}\sum_{p=1}^2\left[\frac{\prod\limits_{i=1}^{m} \left(k|M_{pi}|\right)^{\mu_{pi}+kM_{pi}-\frac{1}{2}} e^{-(\mu_{pi}+kM_{pi})}}{\prod\limits_{j=1}^{n} \left(k|N_{pj}|\right)^{\nu_{pj}+kN_{pj}-\frac{1}{2}} e^{-(\nu_{pj}+kN_{pj})}\times\;k^{k+\frac{1}{2}}e^{-k-1}}\right]\mathcal{V}_p^ke_p\nonumber\\
    &=(2\pi)^{\frac{m-n-1}{2}}\sum_{p=1}^2\frac{e^{[\lambda_p+\frac{n-m}{2}+k(\Upsilon_p+1)]}}{k^{[\lambda_p+k(\Upsilon_p+1)+\frac{1}{2}]}}\frac{(|M_{pi}|)^{\mu_{pi}-\frac{1}{2}}}{(|N_{pj}|)^{\nu_{pj}-\frac{1}{2}}}e_p\nonumber\\
    &=(2\pi)^{\frac{m-n-1}{2}}\sum_{p=1}^2\frac{e^{[\lambda_p+\frac{n-m}{2}]}}{k^{[\lambda_p+\frac{1}{2}]}}\frac{(|M_{pi}|)^{\mu_{pi}-\frac{1}{2}}}{(|N_{pj}|)^{\nu_{pj}-\frac{1}{2}}}e_p.
    \end{align}
    Now let us take $\mathcal{M}_k=(2\pi)^{\frac{m-n-1}{2}}\sum\limits_{p=1}^2\frac{e^{[\lambda_p+\frac{n-m}{2}]}}{k^{[\lambda_p+\frac{1}{2}]}}\frac{(|M_{pi}|)^{\mu_{pi}-\frac{1}{2}}}{(|N_{pj}|)^{\nu_{pj}-\frac{1}{2}}}e_p$. 
    Using \eqref{eq:x4} and \eqref{eq:s6}, it follows that the series $\sum \mathcal{M}_k$ is hyperbolic convergent. Consequently, by the hyperbolic Weierstrass M-test \cite{bc_root}, the series $\sum\limits_{k=0}^{\infty}a_kZ^k$ is absolutely hyperbolic convergent. Therefore, the proof is established.
\end{proof}
\begin{remark}
    By setting $M_i=N_j=1$ for $i=1,2,\ldots m$ and $j=1,2,\ldots n$ in \eqref{eq:fw}, we obtain bicomplex Fox-Wright function represented in terms of bicomplex generalized hypergeometric function as follows: 
    \begin{align}\label{eq:d6}
        {}_{m}\Psi_{n}\left[\begin{matrix}&(\mu_1,1),\ldots &(\mu_m,1);\\&(\nu_1,1),\ldots&(\nu_n,1);&\end{matrix}Z\right]=
  \sum_{k=0}^{\infty}\frac{\prod\limits_{i=1}^{m}\Gamma_b(\mu_i+k)}{\prod\limits_{j=1}^{n}\Gamma_b(\nu_j+k)}\frac{Z^k}{k!}
  =\frac{\prod\limits_{i=1}^{m}\Gamma_b(\mu_i)}{\prod\limits_{j=1}^{n}\Gamma_b(\nu_j)}{}_mF_n\left[\begin{matrix}&\mu_1,\ldots &\mu_m;\\&\nu_1,\ldots&\nu_n;&\end{matrix}Z\right].
    \end{align}
    In \rm{\cite{bicomplex ghf}}, the authors examined the convergence of the bicomplex generalized hypergeometric function ${}_mF_n\left[\begin{matrix}&\mu_1,\ldots &\mu_m;\\&\nu_1,\ldots&\nu_n;&\end{matrix}Z\right]$. By choosing $M_i=N_j=1$ for $i=1,2,\ldots m$ and $j=1,2,\ldots n$ in Theorem \ref{th:a5}, we obtain a convergence condition for the bicomplex generalized hypergeometric function that agrees with our findings.
\end{remark}
\begin{remark}
    Setting $m=n=\mu_1=M_1=1$ in \eqref{eq:fw}, we obtain bicomplex two-parameter Mittag-Leffler function is given by
    \begin{align}\label{eq:b9}
        {}_{1}\Psi_{1}\left[\begin{matrix}&(1,1);\\&(\nu_1,N_1);&\end{matrix}Z\right]=\sum_{k=0}^\infty \frac{Z^k}{\Gamma_b(\nu_1+kN_1)}=E_{N_1,\nu_1}(Z),
    \end{align}
   which was introduced in \rm{\cite{bic_two_mittag}}.
\end{remark}
\begin{remark}
    Setting $i=0,j=1,\nu_1=\mathcal{V}+1,N_1=1$ and $Z=-\frac{Y^2}{4}$ in \eqref{eq:fw}, we have 
     \begin{align}
        {}_{0}\Psi_{1}\left[\begin{matrix}&-;\\&(\mathcal{V}+1,1);&\end{matrix}-\frac{Y^2}{4}\right]=\sum_{k=0}^\infty \frac{Y^{2k}}{4^k\Gamma_b(\mathcal{V}+1+k)k!}=\left(\frac{Y}{2}\right)^{-\mathcal{V}}J_{\mathcal{V}}(Y),
    \end{align}
    where $J_{\mathcal{V}}(Y)$ denotes the bicomplex bessel function. 
\end{remark}
\begin{corollary}
    From the first case of Theorem \ref{th:a5}, if $\Upsilon>_h -1$ then both components of the hyperbolic radius of convergence $R$ are infinite. Hence, according to Theorem $6$ of \rm{\cite{bc_root}}, the bicomplex Fox-Wright function is an entire function in $\mathbb{BC}$. 
\end{corollary}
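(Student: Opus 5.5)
The corollary reduces, through the idempotent decomposition, to two copies of the classical complex statement, followed by the bicomplex criterion for holomorphy. Write the bicomplex Fox--Wright series \eqref{eq:fw} as $\sum_k a_kZ^k=\sum_k a_{1k}z_1^ke_1+\sum_k a_{2k}z_2^ke_2$, where
\[
a_{pk}=\frac{\prod_{i=1}^m\Gamma(\mu_{pi}+kM_{pi})}{\prod_{j=1}^n\Gamma(\nu_{pj}+kN_{pj})\,k!},\qquad p=1,2.
\]
This uses the idempotent form \eqref{eq:g} of $\Gamma_b$ and the fact that products and powers act componentwise on $e_1,e_2$.

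First, the hypothesis $\Upsilon>_h-1$ means $\Upsilon-(-1)=(\Upsilon_1+1)e_1+(\Upsilon_2+1)e_2$ lies in $\mathbb{D}^+$ and is not zero. I would read this as $\Upsilon_1>-1$ and $\Upsilon_2>-1$, which is exactly case (i) of Theorem \ref{th:a5}.

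Second, I would invoke the computation \eqref{eq:mi} from the proof of Theorem \ref{th:a5}. It gives
\[
\limsup_k|a_{pk}|^{1/k}=\frac{e^{\Upsilon_p+1}}{\mathcal{V}_p}\lim_{k\to\infty}k^{-(\Upsilon_p+1)}=0
\]
for both $p=1,2$. Hence $1/R=0\cdot e_1+0\cdot e_2$, so $R_1=R_2=\infty$. Each component series $\sum_k a_{pk}z_p^k$ is then an ordinary complex power series with infinite radius of convergence, so it defines an entire function $F_p(z_p)$ on $\mathbb{C}$.

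Third, I would apply Theorem~6 of \cite{bc_root}. That result states that a bicomplex power series converging in a hyperbolic ball $B_h(0,R)$ defines a bicomplex holomorphic function there. Equivalently, $F(Z)=F_1(z_1)e_1+F_2(z_2)e_2$ with $F_1,F_2$ holomorphic is $\mathbb{BC}$-holomorphic; this is the standard Ringleb-type decomposition. With $R=\infty$ the ball $B_h(0,R)$ is all of $\mathbb{BC}$, so the bicomplex Fox--Wright function is entire.

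The only delicate point is the first step: checking that $>_h$, as defined through $\mathbb{D}^+$, really forces strict inequality in both idempotent components, rather than allowing one component to equal $-1$. If it allowed equality, we would land in cases (ii)/(iii) with a finite $R_p$. Because the statement says $\Upsilon>_h-1$ as in case (i), I would adopt the componentwise strict reading. Once that is fixed, everything else is the already-established root-test computation together with the cited theorem.
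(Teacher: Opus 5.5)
Your proposal is correct and follows the paper's argument: case (i) of Theorem \ref{th:a5} gives $R_1=R_2=\infty$ via the root-test computation \eqref{eq:mi}, and Theorem~6 of \cite{bc_root} then makes the bicomplex Fox--Wright function entire in $\mathbb{BC}$. Your componentwise strict reading $\Upsilon_1,\Upsilon_2>-1$ is the one the paper's own proof of case (i) actually uses, and your remark that $F_1(z_1)e_1+F_2(z_2)e_2$ with $F_1,F_2$ entire is $\mathbb{BC}$-holomorphic is a valid self-contained substitute for the citation.
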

\section{Bicomplex Fox-Wright coherent states}\label{sec5}
 This section develops coherent states related to the bicomplex Fox–Wright function and, via a discrete-to-continuous limit, obtains those corresponding to the continuous spectrum. A bicomplex FW-generalized multi-parameter $\nu$-function is also introduced and show that it acts as the normalization function for these bicomplex Fox–Wright coherent states in the continuous spectrum. In \cite{bicomplex ghf}, we analyzed coherent states related to the bicomplex generalized hypergeometric function. Here, we turn our attention to bicomplex Fox–Wright coherent states. In an infinite-dimensional bicomplex Hilbert space the Fock states are defined by \cite{bicomplex ghf}:
\begin{equation}\label{fock}
\mid k>=\sum_{p=1}^{2}\mid k_p> e_p,\quad k_p=0,1,2,...\nonumber
\end{equation}
and these states satisfy the corresponding orthogonality and completeness relations, which are given by:
\begin{equation}\label{csn3}
<k\mid l>=\sum_{p=1}^{2}\delta_{k_p l_p}e_p\;\;\mbox{and}\;\; \sum_{k=0}^{\infty}\mid k><k\mid=1.\nonumber
\end{equation}
Following the same approach as in the complex case, we introduce the bicomplex Fox–Wright states corresponding to the discrete spectrum, expressed in terms of the Fock states $\mid k>$ as follows
\begin{eqnarray}\label{cs1}
\mid m;n;Z> =\frac{1}{\sqrt{\mathcal{N}^{(m,n)}(\mid Z\mid_h^2)}}\sum_{k=0}^\infty \frac{Z^k}{\sqrt{\rho^{(m,n)}(k)}}\mid k>,
\end{eqnarray}
in which the normalization functions are represented by the bicomplex Fox–Wright functions in the following form
\begin{align}\label{cs3}
  \mathcal{N}_b^{(m,n)}(W)&= \frac{\prod\limits_{j=1}^n \Gamma_b(\nu_j)}{\prod\limits_{i=1}^m\Gamma_b(\mu_i)} {}_{m}\Psi_{n}\left[\begin{matrix}&(\mu_1,M_1),\ldots &(\mu_m,M_m);\\&(\nu_1,N_1),\ldots&(\nu_n,N_n);&\end{matrix}W\right]\nonumber\\
&=\frac{\prod\limits_{j=1}^n \Gamma_b(\nu_j)}{\prod\limits_{i=1}^m\Gamma_b(\mu_i)}\sum_{p=1}^{2}\left[\sum_{k_p=0}^\infty\frac{\prod\limits_{i=1}^m\Gamma(\mu_{pi}+k_pM_{pi})}{\prod\limits_{j=1}^n\Gamma(\nu_{pj}+k_pN_{pj})}\frac{w_p^{k_p}}{k_p!}\right]e_p
\end{align}
 for $W=w_1e_1+w_2e_2=\mid Z \mid_h^2$.
The parameter function $\rho_b^{(m,n)}(k)$ is given by
\begin{eqnarray}\label{cs2}
\rho_b^{(m,n)}(k) =\sum_{p=1}^2 \Gamma(k_p+1)\frac{\prod\limits_{i=1}^m\Gamma(\mu_{pi})}{\prod\limits_{j=1}^n \Gamma(\nu_{pj})}\frac{\prod\limits_{j=1}^n\Gamma(\nu_{pj}+k_pN_{pj})}{\prod\limits_{i=1}^m\Gamma(\mu_{pi}+k_pM_{pi})}e_p.
\end{eqnarray}
Let us consider $r=r_1e_1+r_2e_2$ and similarly define a bicomplex-valued function
\begin{equation*}\label{eq:c3s}
f_b^{(m,n)}(r)=\sum_{p=1}^2\sqrt{\frac{\prod\limits_{i=1}^m\Gamma(\mu_{pi}+r_pM_{pi})\prod\limits_{j=1}^n\Gamma(\nu_{pj}+(r_p+1)N_{pj})}{\prod\limits_{i=1}^m\Gamma(\mu_{pi}+(r_p+1)M_{pi})\prod\limits_{j=1}^n\Gamma(\nu_{pj}+r_pN_{pj})}(r_p+1)}\;e_p,
\end{equation*}
then parameter function $\rho_b^{(m,n)}(k)$ satisfies the following recurrence relations
\begin{equation}\label{cs2b}
\rho_b^{(m,n)}(0)=\sum\limits_{s=1}^2 e_s=1\;\; \mbox{and}\;\;\rho_b^{(m,n)}(k+1)=\rho_b^{(m,n)}(k)\left(f_b^{(m,n)}(k)\right)^2.
\end{equation}
 It follows from the preceding construction that $\rho^{(m,n)}(k)$ must be a strictly positive hyperbolic number, thereby imposing conditions on the associated parameters
\begin{equation*}
\mu_{pi}>0,\nu_{pj}>0\;\quad \mbox{for}\;i=1,2,...,m; j=1,2,...,n; p=1,2.
\end{equation*}
Analogous to the complex case, by employing the normalization functions, we derive the scalar product between two bicomplex Fox–Wright states as given below.
\begin{equation}\label{cs4}
<m;n;Z \mid m;n;Z'>=\frac{\mathcal{N}_b^{(m,n)}(Z*Z')}{\sqrt{\mathcal{N}_b^{(m,n)}(\mid Z\mid_h^2)}\sqrt{\mathcal{N}_b^{(m,n)}(\mid Z'\mid_h^2)}}.
\end{equation}
Equation \eqref{cs4}, shows that the bicomplex Fox–Wright states are normalized but non-orthogonal, and is well defined when the involved bicomplex Fox–Wright functions are convergent. Theorem \ref{th:a5} provides the convergence conditions for the bicomplex Fox–Wright function and we consider the first condition $\Upsilon>_h-1$.
Now, we define annihilation and creation operators as follows
\begin{eqnarray}
\mathcal{U}_-&=&\sum_{k=0}^\infty f_b^{(m,n)}(k)\mid k><k+1\mid\label{csn1}\nonumber\\
\mathcal{U}_+&=&\sum_{k=0}^\infty f_b^{(m,n)}(k)\mid k+1><k\mid\label{csn2}\nonumber
\end{eqnarray}
and similarly we obtain the following relations:
\begin{align}\label{csn5}
&\mathcal{U}_-\mid k>=f_b^{(m,n)}(k-1)\mid k-1>,\nonumber\\
&\mathcal{U}_+\mid k>=f_b^{(m,n)}(k)\mid k+1>,\\
&<k\mid \mathcal{U}_-\mathcal{U}_+\mid k>=\left[f_b^{(m,n)}(k)\right]^2,\nonumber
\end{align}
\begin{equation}\label{csf6}
<k\mid \mathcal{U}_+\mathcal{U}_-\mid k>=\left[f_b^{(m,n)}(k-1)\right]^2,
\end{equation}
and
\begin{equation}\label{csn9}
\left[\;\mathcal{U}_-,\mathcal{U}_+\right]=\sum_{k=0}^{\infty}\left(\left[f_b^{(m,n)}(k)\right]^2-\left[f_b^{(m,n)}(k-1)\right]^2\right)\mid k><k\mid.\nonumber
\end{equation}
Applying the recurrence relations \eqref{cs2b} and following the same procedure as in the complex case, one can readily verify that $\mathcal{U}_-\mid m;n;Z>=Z\mid m;n;Z>$. It follows that the bicomplex Fox–Wright states $\mid m;n;Z>$ are eigenstates of the annihilation operator $\mathcal{U}_-$ corresponding to the bicomplex eigenvalue $Z$ and thus constitute coherent states.\\
 Let us take the ground state $\mid 0>=\mid 0>e_1+\mid 0>e_2$, for which the annihilation operator acts as $\mathcal{U}_- \mid 0>=0\mid 0>e_1+0\mid 0>e_2$. Now, by successively applying the $k$-fold creation operator $\mathcal{U}_+$ to the ground $\mid 0>$ and using \eqref{csn5}, we obtain
\begin{align*}
    (\mathcal{U}_+)^k\mid 0>&=\prod_{r=1}^{k}f^{(m,n)}_b(r)\sum_{p=1}^2\mid k_p>e_p\\
    &=\sqrt{\rho_b^{(m,n)}(k)}\mid k>.
\end{align*}
As $\mathcal{U}_+$ is the adjoint operator of $\mathcal{U}_-$, similar to the complex case, we derive the following relations. 
\begin{align}\label{eq:csb}
    \mid k>=\frac{1}{\sqrt{\rho_b^{(m,n)}(k)}}(\mathcal{U}_+)^k\mid 0>\;\; \mbox{and} \;\;< k \mid=\frac{1}{\sqrt{\rho_b^{(m,n)}(k)}} <0 \mid (\mathcal{U}_-)^k.
\end{align}
Using \eqref{eq:csb} and \eqref{cs1}, we get
\begin{align}\label{eq:csn}
    \mid m;n;Z>= \frac{1}{\sqrt{\mathcal{N}_b^{(m,n)}(\mid Z\mid_h^2)}}\sum_{k=0}^\infty \frac{(Z\mathcal{U}_+)^k}{\rho_b^{(m,n)}(k)}\mid 0>
    \end{align}
    and
    \begin{align}\label{eq:c9sn}
    <m;n;Z \mid= \frac{1}{\sqrt{\mathcal{N}_b^{(m,n)}(\mid Z\mid_h^2)}}<0 \mid\sum_{k=0}^\infty \frac{(Z^\ast\mathcal{U}_-)^k}{{\rho_b^{(m,n)}(k)}}.
\end{align}
D. Popov et al. \cite{coh_cont}, developed coherent states for continuous spectrum via a discrete-to-continuous limit, we adopt the same technique in the bicomplex framework. Let us consider a dimensionless bicomplex Hamiltonian $H$ with non-degenerate continuous spectrum and we define dimensionless eigen states $\mid E>$ as follows:
\begin{align}
    \mid E>=\sum_{p=1}^2\mid E_p>e_p,\;\; 0\leq E_p\leq \infty
\end{align}
with
\begin{align*}
   <E\mid F>=\sum\limits_{p=1}^2\delta(E_p-F_p)e_p.
\end{align*}
The transition from discrete spectrum to continuous spectrum (limit $d\rightarrow c$) is given by $k_p$ replaced by $E_p$ for $p=1,2$, sum replaced by integrals and Kronecker deltas become Dirac deltas:
\begin{align*}
    \sum_{k_p=0}^\infty h(k_p)\rightarrow \int_{0}^\infty dE_p{\Tilde{h}(E_p)}, \delta_{k_p l_p}\rightarrow \delta(E_p-F_p).
\end{align*}
The completeness relation of the eigen state $\mid E>$ for continuous spectrum is define as follows:
\begin{align}\label{cs56}
    \int\limits_{0}^\infty dE\mid E><E\mid= \sum\limits_{p=1}^2\int\limits_{0}^\infty dE_p\mid E_p><E_p\mid e_p=I.
\end{align}
First, we derived bicomplex Fox-Wright coherent states given in \eqref{cs1} for continuous spectrum and subsequently, we showed that these coherent states for continuous spectrum fulfill the following properties:
\begin{itemize}
    \item [(i)] the set $\mid \Tilde{Z}>$ are normalizable i,e. $<\Tilde{Z} \mid \Tilde{Z}>=1$,
    \item [(ii)] the set $\mid \Tilde{Z}>$ are continuous i,e. $|Z'-Z|_h\rightarrow 0$ implies that $||\mid \Tilde{Z}'>-\mid \Tilde{Z}>||\rightarrow 0$,
    \item[(iii)] the set $\mid \Tilde{Z}>$ satisfies resolution of unity i,e. a weight function $\Tilde{W}_b(|Z|^2_h)$ is chosen of the integration measure $$d\Tilde{\mu}_b(Z)=\frac{d\theta_b}{2\pi}d(|Z|^2_h)\Tilde{W}_b(|Z|^2_h)$$ such that the integral $\int d \Tilde{\mu}_b(Z)\mid \Tilde{Z}><\Tilde{Z}\mid=1$ holds, where $\theta_b=\theta_1e_1+\theta_2e_2,\;0<\theta_1,\theta_2<2\pi$. 
\end{itemize}

Apply discrete continuous limit $d\rightarrow c$ in \eqref{cs3} and \eqref{cs2}, we obtain
\begin{align}\label{eq:c2}
    \mathcal{\Tilde{N}}_b^{(m,n)}(W)=\lim_{d\rightarrow c}\mathcal{N}_b^{(m,n)}(W)&=\frac{\prod\limits_{j=1}^n \Gamma_b(\nu_j)}{\prod\limits_{i=1}^m\Gamma_b(\mu_i)}\sum_{p=1}^{2}\left[\int\limits_{0}^\infty dE_P\frac{\prod\limits_{i=1}^m\Gamma(\mu_{pi}+E_pM_{pi})}{\prod\limits_{j=1}^n\Gamma(\nu_{pj}+E_pN_{pj})}\frac{w_p^{E_p}}{\Gamma(E_p+1)}\right]e_p\nonumber\\
    &=\frac{\prod\limits_{j=1}^n \Gamma_b(\nu_j)}{\prod\limits_{i=1}^m\Gamma_b(\mu_i)}\int\limits_{0}^\infty dE\frac{\prod\limits_{i=1}^m\Gamma_b(\mu_{i}+EM_{i})}{\prod\limits_{j=1}^n\Gamma_b(\nu_{j}+EN_{j})}\frac{W^E}{\Gamma_b(E+1)}
\end{align}
and 
\begin{align}\label{eq:csr}
    \Tilde{\rho}_b^{(m,n)}(E)=\lim_{d\rightarrow c}\rho_b^{(m,n)}(k)=\sum_{p=1}^2 \Gamma(E_p+1)\frac{\prod\limits_{i=1}^m\Gamma(\mu_{pi})}{\prod\limits_{j=1}^n \Gamma(\nu_{pj})}\frac{\prod\limits_{j=1}^n\Gamma(\nu_{pj}+E_pN_{pj})}{\prod\limits_{i=1}^m\Gamma(\mu_{pi}+E_pM_{pi})}e_p.
\end{align}
It follows from \eqref{eq:c2}, that the two idempotent components of $\mathcal{\Tilde{N}}_b^{(m,n)}(W)$ correspond to FW-generalized multi-parameter $\nu$-functions. Consequently, this represents a bicomplex extension of the FW-generalized multi-parameter $\nu$-function. Then we refer to it as the bicomplex FW-generalized multi-parameter $\nu$-function and denote it by $\mathcal{V}^{(m,n)}_b(W)$. Therefore $\mathcal{\Tilde{N}}_b^{(m,n)}(W)=\mathcal{V}^{(m,n)}_b(W)$. Similarly, apply discrete continuous limit $d\rightarrow c$ in \eqref{cs1}, we have
\begin{align}
    \mid m;n;\Tilde{Z}>=\lim_{d\rightarrow c}\mid m;n;Z> =\frac{1}{\sqrt{\mathcal{V}^{(m,n)}_b(|Z|^2_h)}}\int\limits_{0}^\infty dE \frac{Z^E}{\sqrt{\Tilde{\rho}_b^{(m,n)}(E)}}\mid E>,
\end{align}
which is the expression of coherent states for continuous spectrum. The overlap of two coherent states for continuous spectrum is given by
\begin{equation}
<m;n;\Tilde{Z} \mid m;n;\Tilde{Z}'>=\lim_{d\rightarrow c}<m;n;Z \mid m;n;Z'>=\frac{\mathcal{V}_b^{(m,n)}(Z*Z')}{\sqrt{\mathcal{V}_b^{(m,n)}(\mid Z\mid_h^2)}\sqrt{\mathcal{V}_b^{(m,n)}(\mid Z'\mid_h^2)}},\nonumber
\end{equation}
which implies the bicomplex Fox-Wright coherent states for continuous spectrum are normalized.
Again apply discrete continuous limit $d\rightarrow c$ in \eqref{eq:csb}, we obtain
\begin{align}\label{cs23}
    \mid E>=\frac{1}{\sqrt{\Tilde{\rho}_b^{(m,n)}(E)}}(\mathcal{U}_+)^E\mid 0>\;\; \mbox{and} \;\;< E \mid=\frac{1}{\sqrt{\Tilde{\rho}_b^{(m,n)}(E)}} <0 \mid (\mathcal{U}_-)^E.
\end{align}
By applying equations \eqref{cs56} and \eqref{cs23} together with the DOOT rules \cite{doot}, we obtain the expression for the projector $\mid 0><0 \mid$ of the vacuum state $\mid 0>$ associated with the continuous spectrum
\begin{align}
   I=\int\limits_{0}^\infty dE \mid E><E\mid &=\mid 0><0 \mid \int\limits_{0}^\infty dE\frac{1}{\Tilde{\rho}_b^{(m,n)}(E)} \#(\mathcal{U}_+ \mathcal{U}_-)^E\#\nonumber\\
    &=\mid 0><0 \mid \sum_{p=1}^{2}\left[\int\limits_{0}^\infty dE_P\frac{\prod\limits_{j=1}^n \Gamma(\nu_{pj})\prod\limits_{i=1}^m\Gamma(\mu_{pi}+E_pM_{pi})}{\prod\limits_{i=1}^m\Gamma(\mu_{pi})\prod\limits_{j=1}^n\Gamma(\nu_{pj}+E_pN_{pj})}\frac{\#(\mathcal{U}_+ \mathcal{U}_-)^{E_p}\#}{\Gamma(E_p+1)}\right]e_p\nonumber\\
    &=\mid 0><0 \mid \#\mathcal{V}^{(m,n)}_b(\mathcal{U}_+ \mathcal{U}_-)\#,\nonumber
\end{align}
which implies that
\begin{align}
    \mid 0><0 \mid=\frac{1}{\#\mathcal{V}^{(m,n)}_b(\mathcal{U}_+ \mathcal{U}_-)\#}.
\end{align}
By applying the discrete-continuous limit in \eqref{eq:csn} and \eqref{eq:c9sn}, we get
\begin{align}\label{eq:csi}
    \mid m;n; \Tilde{Z}>= \frac{1}{\sqrt{\mathcal{V}^{(m,n)}_b(\mid Z\mid_h^2)}}\int\limits_{0}^\infty dE\frac{(Z\mathcal{U}_+)^E}{\Tilde{\rho}_b^{(m,n)}(E)}\mid 0>
    \end{align}
    and
    \begin{align}\label{eq:csii}
    <m;n;\Tilde{Z} \mid= \frac{1}{\sqrt{\mathcal{V}^{(m,n)}_b(\mid Z\mid_h^2)}}<0 \mid\int\limits_{0}^\infty dE\frac{(Z^\ast\mathcal{U}_-)^E}{{\Tilde{\rho}_b^{(m,n)}(E)}}.
\end{align}
Next, our goal is to determine the weight function $\Tilde{W}(|Z|^2_h)$ so that bicomplex generalized coherent state fulfill the resolution of unity
\begin{align}\label{eq:csm}
    &\int d \Tilde{\mu}_b(Z)\mid m;n;\Tilde{Z}><m;n;\Tilde{Z}\mid=1.
\end{align}
Using \eqref{eq:csi},\eqref{eq:csii} and \eqref{eq:csm}, we get
\begin{align}\label{eq:csp}
    & \int \frac{d\theta_b}{2\pi}d(|Z|^2_h)\Tilde{W}(|Z|^2_h)\frac{\mid 0><0\mid}{\mathcal{V}_b^{(m,n)}(\mid Z\mid_h^2)}\int\limits_{0}^\infty dE\frac{(|Z|_h^2)^E}{{[\Tilde{\rho}_b^{(m,n)}(E)]^2}}\#(\mathcal{U}_+\mathcal{U}_-)^E\#=1\nonumber\\
    &\Rightarrow  \left[\sum_{p=0}^2\int_{0}^{2\pi} \frac{d\theta_p}{2\pi}e_p\right]\left[\int_{0}^{\infty}d(|Z|^2_h)\frac{\Tilde{W}(|Z|^2_h)\mid 0><0\mid}{\mathcal{V}_b^{(m,n)}\left(\mid Z\mid_h^2\right)}\int\limits_{0}^\infty dE\frac{(|Z|_h^2)^E}{{[\Tilde{\rho}_b^{(m,n)}(E)]^2}}\#(\mathcal{U}_+\mathcal{U}_-)^E\#\right]=1\nonumber\\
    &\Rightarrow \frac{1}{\#\mathcal{V}^{(m,n)}_b(\mathcal{U}_+ \mathcal{U}_-)\#}\left[\int\limits_{0}^\infty dE\frac{\#(\mathcal{U}_+\mathcal{U}_-)^E\#}{{[\Tilde{\rho}_b^{(m,n)}(E)]^2}}\int_{0}^{\infty}d(|Z|^2_h)\frac{\Tilde{W}(|Z|^2_h)(|Z|_h^2)^E}{\mathcal{V}^{(m,n)}_b\left(\mid Z\mid_h^2\right)}\right]=1
    \end{align}
    Let us take $\frac{\Tilde{W}(|Z|^2_h)}{\mathcal{V}^{(m,n)}_b\left(\mid Z\mid_h^2\right)}=\sum\limits_{p=1}^2h_{p}e_p$, substitute in \eqref{eq:csp} and using \eqref{eq:csr}, we get
    \begin{align}\label{eq:cp4}
    & \sum_{p=0}^2 \int\limits_{0}^\infty dE_p\frac{\prod\limits_{j=1}^n \Gamma(\nu_{pj})\prod\limits_{i=1}^m\Gamma(\mu_{pi}+E_pM_{pi})}{\prod\limits_{i=1}^m\Gamma(\mu_{pi})\prod\limits_{j=1}^n\Gamma(\nu_{pj}+E_pN_{pj})}\frac{\#(\mathcal{U}_+\mathcal{U}_-)^{E_p}\#}{\Gamma(E_p+1)}\nonumber\\&\hspace{30pt}\left[\frac{\prod\limits_{j=1}^n \Gamma(\nu_{pj})\prod\limits_{i=1}^m\Gamma(\mu_{pi}+E_pM_{pi})}{\prod\limits_{i=1}^m\Gamma(\mu_{pi})\prod\limits_{j=1}^n\Gamma(\nu_{pj}+E_pN_{pj})}\frac{1}{\Gamma(E_p+1)}\int\limits_{0}^\infty dw_ph_p(w_p)^{E_p}\right]e_p=\#\mathcal{V}^{(m,n)}_b(\mathcal{U}_+ \mathcal{U}_-)\#.
    \end{align}
    Using \eqref{cs3} and \eqref{eq:cp4}, it is found that the value of the second integral is
    \begin{align}\label{eq:cha}
        \int\limits_{0}^\infty dw_ph_p(w_p)^{E_p}=\Gamma(E_p+1)\frac{\prod\limits_{i=1}^m\Gamma(\mu_{pi})}{\prod\limits_{j=1}^n \Gamma(\nu_{pj})}\frac{\prod\limits_{j=1}^n\Gamma(\nu_{pj}+E_pN_{pj})}{\prod\limits_{i=1}^m\Gamma(\mu_{pi}+E_pM_{pi})}.
    \end{align}
  Substituting 
$E_p=s-1$ into equation \eqref{eq:cha} and using the Mellin transform of a $H$-function \eqref{eq:mlt7}, we obtain 
\begin{align*}
    h_p=\frac{\prod\limits_{i=1}^m\Gamma(\mu_{pi})}{\prod\limits_{j=1}^n \Gamma(\nu_{pj})}
    H_{m,n+1}^{n+1,0} \left[ |z_p|^2 \;\middle|\; \begin{matrix} &(\mu_{p1}-M_{p1},M_{p1}),&(\mu_{p2}-M_{p2},M_{p2}) \ldots, &(\mu_{pm}-M_{pm},M_{pm}) \\ &(0,1),&(\nu_{p1}-N_{p1},N_{p1}), \ldots, &(\nu_{pn}-N_{pn},N_{pn}) \end{matrix} \right],
\end{align*}
 for $p=1,2$ and the weight function is given by
\begin{align*}
   &\Tilde{W}(|Z|^2_h)=\mathcal{V}^{(m,n)}_b\left(\mid Z\mid_h^2\right)\\&\hspace{10pt}\sum_{p=1}^2\frac{\prod\limits_{i=1}^m\Gamma(\mu_{pi})}{\prod\limits_{j=1}^n \Gamma(\nu_{pj})}
    H_{m,n+1}^{n+1,0} \left[ |z_p|^2 \;\middle|\; \begin{matrix} &(\mu_{p1}-M_{p1},M_{p1}),&(\mu_{p2}-M_{p2},M_{p2}) \ldots, &(\mu_{pm}-M_{pm},M_{pm}) \\ &(0,1),&(\nu_{p1}-N_{p1},N_{p1}), \ldots, &(\nu_{pn}-N_{pn},N_{pn}) \end{matrix} \right]e_p.
\end{align*}
Hence, the corresponding integration measure for continuous spectrum is given by
\begin{align*}
    &d\Tilde{\mu}_b(Z)=\frac{d\theta_b}{2\pi}d(|Z|^2_h)\mathcal{V}^{(m,n)}_b\left(\mid Z\mid_h^2\right)\\&\hspace{10pt}
\sum_{p=1}^2\frac{\prod\limits_{i=1}^m\Gamma(\mu_{pi})}{\prod\limits_{j=1}^n \Gamma(\nu_{pj})}
    H_{m,n+1}^{n+1,0} \left[ |z_p|^2 \;\middle|\; \begin{matrix} &(\mu_{p1}-M_{p1},M_{p1}),&(\mu_{p2}-M_{p2},M_{p2}) \ldots, &(\mu_{pm}-M_{pm},M_{pm}) \\ &(0,1),&(\nu_{p1}-N_{p1},N_{p1}), \ldots, &(\nu_{pn}-N_{pn},N_{pn}) \end{matrix} \right]e_p.
\end{align*}
Thus the bicomplex Fox-Wright coherent states given in \eqref{cs1} satisfies the fundamental properties of normalizability, continuity, and resolution of unity. Now, if we consider the inverse process, that is the continuous–discrete limit $c\rightarrow d$, we obtain integration measure for discrete spectrum which is given by
\begin{align}\label{eq:c4sf}
    d\mu_b(Z)&=\lim_{c\rightarrow d}d\Tilde{\mu}_b(Z)\nonumber\\&=\frac{d\theta_b}{2\pi}d(|Z|^2_h){}_{m}\Psi_{n}\left[\begin{matrix}&(\mu_1,M_1),\ldots &(\mu_m,M_m);\\&(\nu_1,N_1),\ldots&(\nu_n,N_n);&\end{matrix}\mid Z\mid_h^2\right]\nonumber\\&\hspace{20pt}
\sum_{p=1}^2 H_{m,n+1}^{n+1,0} \left[ |z_p|^2 \;\middle|\; \begin{matrix} &(\mu_{p1}-M_{p1},M_{p1}),&(\mu_{p2}-M_{p2},M_{p2}) \ldots, &(\mu_{pm}-M_{pm},M_{pm}) \\ &(0,1),&(\nu_{p1}-N_{p1},N_{p1}), \ldots, &(\nu_{pn}-N_{pn},N_{pn}) \end{matrix} \right]e_p.
\end{align}
\begin{corollary}\label{cr1}
    If $M_i=N_j=1$ for $i=1,2,\ldots,m$ and $j=1,2,\ldots,n$ in \eqref{cs3},\eqref{cs2} and \eqref{eq:c4sf}, then we recover the bicomplex generalized hypergeometric coherent states $\mid m;n;Z>$ \rm{\cite{bicomplex ghf}}, with the normalization function expressed as 
    \begin{align*}
        \mathcal{N}_b^{(m,n)}(W)={}_{m}F_{n}\left[\begin{matrix}&\mu_1,\mu_2,\ldots &\mu_m;\\&\nu_1,\nu_2,\ldots&\nu_n;&\end{matrix}W\right],
    \end{align*}
   and corresponding integration measure for discrete spectrum is given by
   \begin{align*}
        &d\mu_b(Z)=\frac{d\theta_b}{2\pi}d(|Z|^2_h){}_{m}F_{n}\left[\begin{matrix}&\mu_1,\ldots &\mu_m;\\&\nu_1,\ldots&\nu_n;&\end{matrix}\mid Z\mid_h^2\right]\sum_{p=1}^2 G_{m,n+1}^{n+1,0} \left[ |z_p|^2 \;\middle|\; \begin{matrix} &\mu_{p1}-1,&\mu_{p2}-1 \ldots, &\mu_{pm}-1 \\ &0,&\nu_{p1}-1, \ldots, &\nu_{pn}-1 \end{matrix} \right]e_p,
   \end{align*}
   where $\theta_b=\theta_1e_1+\theta_2e_2,\;0<\theta_1,\theta_2<2\pi$.
\end{corollary}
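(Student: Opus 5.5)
The plan is to specialize each of the three objects \eqref{cs3}, \eqref{cs2} and \eqref{eq:c4sf} separately, working throughout in the idempotent representation. Every bicomplex quantity decomposes as $\sum_{p=1}^2(\cdot)_p e_p$, so each identity reduces to two ordinary complex identities, one per component $p=1,2$.

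First, set $M_i=N_j=1$, i.e.\ $M_{pi}=N_{pj}=1$ for $p=1,2$, in \eqref{cs3}. By \eqref{eq:g}, $\Gamma_b(\mu_i+k)/\Gamma_b(\mu_i)$ has components $\Gamma(\mu_{pi}+k)/\Gamma(\mu_{pi})=(\mu_{pi})_k$, and similarly for the $\nu_j$. The prefactor $\prod\Gamma_b(\nu_j)/\prod\Gamma_b(\mu_i)$ therefore cancels against the gamma functions inside the series. Each component becomes ${}_mF_n$ in the variables $\mu_{pi},\nu_{pj},w_p$, and recombining the components gives the bicomplex ${}_mF_n(W)$. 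This is exactly the identity \eqref{eq:d6} from the earlier remark. The parameter function \eqref{cs2} likewise reduces to $\sum_p k_p!\,\prod_j(\nu_{pj})_{k_p}/\prod_i(\mu_{pi})_{k_p}\,e_p$, which is the parameter function of the bicomplex generalized hypergeometric coherent states in \cite{bicomplex ghf}. Substituting into \eqref{cs1} then recovers those states $\mid m;n;Z>$.

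For the measure, I would substitute $M=N=1$ into \eqref{eq:c4sf}. The prefactor ${}_m\Psi_n$ equals $\frac{\prod\Gamma_b(\mu_i)}{\prod\Gamma_b(\nu_j)}{}_mF_n(|Z|_h^2)$ by \eqref{eq:d6}. For the $H$-function, I would use the standard reduction: when all scaling parameters equal $1$, $H^{n+1,0}_{m,n+1}$ collapses to the Meijer function $G^{n+1,0}_{m,n+1}$ with parameters $\mu_{pi}-1$ and $0,\nu_{pj}-1$. One can check this directly from \eqref{eq:mlt7}, since with unit scalings the Mellin transform coincides with that of the $G$-function.

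The only delicate step is the bookkeeping of the constant gamma-ratio. In the complex case of Remark \ref{rm1}, a factor $\prod\Gamma(a_l)/\prod\Gamma(b_r)$ survives in front of $W$. Here it would appear componentwise as $\frac{\prod_i\Gamma(\mu_{pi})}{\prod_j\Gamma(\nu_{pj})}e_p$ multiplying the $G$-function sum, so I must verify how it interacts with the normalization that \eqref{eq:c4sf} adopts. I expect that \eqref{eq:c4sf} was obtained from $d\tilde\mu_b$ by replacing $\mathcal V_b$ with the unnormalized ${}_m\Psi_n$, which absorbs this constant. I would confirm this by rederiving the resolution of unity componentwise: via \eqref{eq:mlt7}, the moment condition \eqref{eq:cha} at $M=N=1$ fixes $h_p$ as the stated $G$-function times $\prod\Gamma(\mu_{pi})/\prod\Gamma(\nu_{pj})$. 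Multiplying by $\mathcal N_b$, written as the prefactor times ${}_m\Psi_n$, then cancels the constants and yields exactly the displayed measure with ${}_mF_n$ times $\sum_p G\,e_p$. That measure agrees with \cite{bicomplex ghf}, which completes the argument.
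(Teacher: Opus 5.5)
Your reductions of \eqref{cs3} and \eqref{cs2} through \eqref{eq:d6} are correct, as is the collapse of $H^{n+1,0}_{m,n+1}$ with unit scalings to $G^{n+1,0}_{m,n+1}$. This is exactly the direct substitution the paper relies on; it gives no further argument. The gap is the cancellation claimed in your last paragraph, which does not occur.

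Write $G_p$ for the $G$-function in the statement. At $M=N=1$, \eqref{eq:cha} and \eqref{eq:mlt7} give $h_p=\frac{\prod_i\Gamma(\mu_{pi})}{\prod_j\Gamma(\nu_{pj})}G_p$, as you say. But your first step showed $\mathcal N_b^{(m,n)}={}_mF_n$ with no gamma prefactor, so the weight is ${}_mF_n(|Z|_h^2)\sum_p\frac{\prod_i\Gamma(\mu_{pi})}{\prod_j\Gamma(\nu_{pj})}G_pe_p$. Suppose instead you write $\mathcal N_b^{(m,n)}$ as prefactor times ${}_m\Psi_n$. The constant then does cancel against $h_p$, but you are left with ${}_m\Psi_n\sum_pG_pe_p$. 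By \eqref{eq:d6}, which you quoted a paragraph earlier, ${}_m\Psi_n=\frac{\prod_i\Gamma_b(\mu_i)}{\prod_j\Gamma_b(\nu_j)}\,{}_mF_n$ at $M=N=1$. Your final sentence silently replaces ${}_m\Psi_n$ by ${}_mF_n$.

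Consequently, substituting $M_i=N_j=1$ in \eqref{eq:c4sf} actually gives
\[
d\mu_b(Z)=\frac{d\theta_b}{2\pi}\,d(|Z|_h^2)\,\frac{\prod_{i=1}^m\Gamma_b(\mu_i)}{\prod_{j=1}^n\Gamma_b(\nu_j)}\,{}_mF_n(|Z|_h^2)\sum_{p=1}^2G_p\,e_p .
\]
This is the exact bicomplex analogue of Remark~\ref{rm1}, where the paper does keep the factor $\prod_l\Gamma(a_l)/\prod_r\Gamma(b_r)$. The measure displayed in the corollary drops this positive hyperbolic constant, and without it the resolution of unity fails. Indeed, $\int_0^\infty G_p(x)\,x^{k_p}\,dx=\Gamma(k_p+1)\prod_j\Gamma(\nu_{pj}+k_p)/\prod_i\Gamma(\mu_{pi}+k_p)$. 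This equals $\prod_j\Gamma(\nu_{pj})/\prod_i\Gamma(\mu_{pi})$ times the $e_p$-component of $\rho_b^{(m,n)}(k)$. So the displayed $d\mu_b$ yields $\frac{\prod_j\Gamma_b(\nu_j)}{\prod_i\Gamma_b(\mu_i)}$ times the identity instead of $1$.

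The inconsistency therefore sits in the displayed formula, not in anything you overlooked. Done correctly, your computation proves the version with the gamma-ratio prefactor. State and prove that version rather than forcing agreement with the display.
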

\section{Concluding remarks}
In this paper, we have introduced a new class of coherent states in which the Fox–Wright function appears as the normalization function. Consequently, these coherent states can be called Fox-Wright coherent states. We have shown that these states satisfy the fundamental conditions of continuity, normalizability and resolution of unity. According to Remark \ref{rm1} by assigning specific values to the parameters $A_l$ and $B_r$ coherent states associated with Fox-Wright function coincides with the generalized hypergeometric coherent states presented in \cite{gh_cs}. In the same manner, particular cases lead to different classes of coherent states involving other special functions. In addition, we have defined the Fox–Wright function in bicomplex space and its existence has been examined across nine distinct cases. It has been shown that when $\Upsilon=-1$ the corresponding series is absolutely hyperbolically convergent inside the hyperbolic ball $\mathcal{B}_h(0,\mathcal{V})$ and on its boundary the convergence is uniform and absolute in the hyperbolic sense whenever $\Re(\Lambda_1)-\frac{1}{2}>|\Im(\Lambda_2)|$ . Later, we have derived an application of the bicomplex Fox–Wright function in the construction of coherent states, referred to as bicomplex Fox–Wright coherent states. Moreover, we have proposed a framework for transition from the discrete spectrum to the continuous spectrum in the bicomplex setting. Additionally, we have defined a bicomplex-valued function, namely the bicomplex FW-generalized multi-parameter $\nu$-function $\mathcal{V}^{(m,n)}_b(W)$ and discussed its connection with bicomplex Fox–Wright coherent states for the continuous spectrum, where it serves as the normalization function. Corollary \ref{cr1}, shows that, for appropriate choices of the parameters $M_i$ and $N_j$ the coherent states associated with the bicomplex Fox–Wright function reduce to the bicomplex generalized hypergeometric coherent states obtained in \rm{\cite{bicomplex ghf}}.\\\\
\textbf{Acknowledgment.} The author “Snehasis Bera” expresses sincere gratitude to the “University Grants Commission”, India for providing financial support (Benificiary Code: BWBME00403738) for carrying out this research work.

\end{document}